\documentclass[a4paper,twocolumn,11pt,accepted=2025-11-04]{quantumarticle}
\pdfoutput=1
\usepackage[utf8]{inputenc}
\usepackage[english]{babel}
\usepackage[T1]{fontenc}
\usepackage{amsmath}
\usepackage{hyperref}

\usepackage{tikz}
\usepackage{lipsum}

\usepackage{graphicx}% Include figure files
\usepackage{dcolumn}% Align table columns on decimal point
\usepackage{amssymb,amsfonts,amsthm}
\usepackage{mathtools} %interesting stuff
\usepackage{txfonts}
\usepackage{bbm} %Enables correct fonts for the fields like C,R etc.
\usepackage{bm} %Bold Math
\usepackage{mathrsfs} %For nice looking curly CAPS
\usepackage{xcolor}
\usepackage[numbers,sort&compress]{natbib}
\usepackage{braket}

\allowdisplaybreaks
\usepackage{multicol,blindtext}

%---- Fields----%

%---- Shorthands ----%
\DeclareMathOperator{\tr}{Tr}

%---- Bold maths ----%

\newtheorem{theorem}{Theorem}
\newtheorem{lemma}{\emph{Lemma}}

\newtheorem{corollary}{Corollary}

\newcommand{\norm}[1]{ \lVert #1  \rVert}

%---- Specific to this work ----%
%General free state
\newcommand{\FS}{\mathcal{F}}
\newcommand{\FPT}{\mathcal{E}}

%Colors for comments

\begin{document}

\title{Stochastic approximate state conversion for entanglement and general quantum resource theories}

\author{Tulja Varun Kondra}
\email{tuljavarun@gmail.com}
\affiliation{Centre for Quantum Optical Technologies, Centre of New Technologies,
University of Warsaw, Banacha 2c, 02-097 Warsaw, Poland}
\affiliation{Institute for Theoretical Physics III, Heinrich Heine University D\"{u}sseldorf, Universit\"{a}tsstra{\ss}e 1, D-40225 D\"{u}sseldorf, Germany}

\author{Chandan Datta}
\affiliation{Centre for Quantum Optical Technologies, Centre of New Technologies,
University of Warsaw, Banacha 2c, 02-097 Warsaw, Poland}
\affiliation{Department of Physics, Indian Institute of Technology Jodhpur, Jodhpur 342030, India}

\author{Alexander Streltsov}
\affiliation{Institute of Fundamental Technological Research, Polish Academy of Sciences, \\ Pawi\'nskiego 5B, 02-106 Warsaw, Poland}

\begin{abstract}
   Quantum resource theories provide a mathematically rigorous way of understanding the nature of various quantum resources. An important problem in any quantum resource theory is to determine how quantum states can be converted into each other within the physical constraints of the theory. The standard approach to this problem is to study approximate or probabilistic transformations. Here, we investigate the intermediate regime, providing limits on both, the fidelity and the probability of state transformations. We derive limitations on the transformations, which are valid in all quantum resource theories, by providing bounds on the maximal transformation fidelity for a given transformation probability. As an application, we show that these bounds imply an upper bound on the asymptotic rates for various classes of states under probabilistic transformations. We also show that the deterministic version of the single copy bounds can be applied for drawing limitations on the manipulation of quantum channels, which goes beyond the previously known bounds of channel manipulations. Furthermore, we completely solve the question of stochastic-approximate state conversion via local operations and classical communication in the following two cases: (i) Both initial and target states are pure bipartite entangled states of arbitrary dimensions. (ii) The target state is a two-qubit entangled state and the initial state is a pure bipartite state.
%   demonstrate that in some setups the transformation fidelity can be increased significantly by compromising the transformation probability only slightly.
\end{abstract}

\maketitle

\section{Introduction} The goal of quantum resource theories~\cite{Chitambar_2019,HorodeckiResourceTheories} is to capture the limitations of setups relevant for quantum technologies, thus allowing us to study fundamental phenomena of quantum systems and their usefulness for quantum technological tasks. If one considers two spatially separated parties who can perform all quantum transformations in their local labs and can additionally exchange messages via a classical channel, one arrives at the resource theory of entanglement \cite{Bennett_1996,Vedral_97,Horodecki_2009}. Quantum entanglement is useful for various tasks such as quantum teleportation~\cite{Bennett_1993}, quantum dense coding~\cite{Bennett_1992}, and quantum key distribution~\cite{Ekert_1991}. However, not all quantum technological applications are based on the presence of entanglement, but can rely on other features of quantum systems. For this reason various quantum resource theories have been studied in the recent literature, including quantum thermodynamics~\cite{brandao_TO,Goold_2016}, purity~\cite{HorodeckiPhysRevA.67.062104,GOUR_purity,Streltsov_2018}, coherence~\cite{plennio_coh,WinterPhysRevLett.116.120404,StreltsovPhysRevLett.115.020403,Streltsov_2017}, imaginarity~\cite{Gour2018,Wu_PRL,IM2,varun_imaginarity}, and asymmetry \cite{gour_asymmetry,GourPhysRevA.80.012307}. For continuous-variable systems, quantum resource theories have been studied with a focus on Gaussian states and operations~\cite{Takagi_2018,LamiPhysRevA.98.022335,AlbarelliPhysRevA.98.052350}.

At the heart of every quantum resource theory is the definition of free states and operations, corresponding to quantum states and quantum transformations which are easy to establish or implement. One of the basic questions in every quantum resource theory concerns quantum state conversion: is it possible to transform a quantum state into another one by using a free operation? This question is essential for entanglement theory, as many quantum information processing tasks require singlets for achieving optimal performance. It is thus crucial to develop optimal methods to convert a less useful state into another state which is potentially more useful for the specific task. First results addressing this question concern transformations between pure entangled states. A complete solution to this problem has been given in terms of the Schmidt coefficients of the corresponding quantum states~\cite{Nielsen_1999}. If a conversion is not possible between two states, they still admit a probabilistic or approximate transformation, and results in this direction have been reported for specific setups \cite{vidal_prob,Vidal_approx,thomas_coh,Wu_PRL,IM2,vidal_purestates,JonathanPhysRevLett.83.1455,regula2021probabilistic}.

Many quantum resource theories also allow for asymptotic state conversion, where $n$ copies of a quantum state can be converted into $rn$ copies of another state with a rate $r$, allowing for an error which vanishes in the limit $n \rightarrow \infty$. For pure states, the optimal conversion rate for such asymptotic transformations is known for bipartite entanglement theory~\cite{Bennett_1996} and for the resource theory of coherence~\cite{WinterPhysRevLett.116.120404}. The optimal conversion rates are known for all (pure and mixed) quantum states for the resource theory of purity~\cite{HorodeckiPhysRevA.67.062104}.
% , as well as for reversible quantum resource theories admitting certain structure~\cite{HorodeckiPhysRevLett.89.240403,BrandaoPhysRevLett.115.070503,BrandaoPhysRevLett.115.199901,WinterPhysRevLett.116.120404}. 
The main disadvantage of the asymptotic conversion is the need to manipulate a large number of copies simultaneously. To achieve optimal performance, this procedure typically requires entangled measurements across many particles, which are out of reach for current quantum technologies. 

In this article, we first focus on state transformations which involve a single copy of a quantum state. We consider the intermediate regime between probabilistic and approximate transformations. Here, the goal is to convert a quantum state $\rho$ into another quantum state $\sigma$ with maximal probability, allowing for a small error in the conversion. 

This intermediate regime between probabilistic and approximate transformations has been very relevant in many practical schemes of entanglement manipulation, as they explicitly allow for a small probability of failure and the optimal fidelity is considered in the case of success~\cite{Desef2022optimizingquantum,PhysRevA.97.062333,PhysRevX.4.041041,PhysRevLett.76.722,PhysRevLett.77.2818,PhysRevA.64.014301,PhysRevA.64.012304,WOS:000168285500040,PhysRevLett.101.130502}. Specifically in quantum networks, such a trade-off between the probability of success and the achievable fidelity is highly relevant~\cite{WOS:000172304500034,PhysRevA.71.060310}. Therefore, it is natural to develop a general framework for the manipulation of quantum resources within such probabilistic–approximate settings.

This intermediate regime has previously been studied in \cite{Fang_2019,Regula_distillation,regula2021probabilistic,9834375,Regula2022tightconstraints}. All previous works made the standard assumption that the target state (or channel) is pure \cite{Fang_2019,Regula_distillation,regula2021probabilistic,9834375,Regula2022tightconstraints}. However, this assumption often fails in many resource theories, where the golden resource is not pure. Notable examples include resource theories of distinguishability \cite{PhysRevResearch.1.033170,Salzmann_2021}, Bell nonlocality \cite{RevModPhys.86.419}, and measurement incompatibility\cite{Heinosaari_2016}. We derive restrictions on the achievable fidelity and probability at the single copy level, which go beyond this purity assumption. We show that the deterministic version of the single copy bounds can be utilized to set restrictions on transformation of channels.

In addition, our bounds are also applicable to resource theories where the free set is non-convex, such non-convex resource theories have been of interest in recent times \cite{Kuroiwa_2024,Kuroiwa_2024_1,salazar2024convexity}. Interestingly, our bounds depend solely on the resource quantifiers of the initial and target states, a simplicity that enables their direct application to many-copy scenarios. We further analyze their behavior in the asymptotic limit, where they provide upper bounds on the achievable transformation rates.

We then turn to a concrete setting: for bipartite entangled state transformations, we provide a complete analytical solution for pure states of arbitrary dimension, as well as for two-qubit systems when the initial state is pure.
\medskip

\section{Stochastic approximate state conversion}
% An important element of every quantum resource theory is the definition of free operations, corresponding to easy implementable transformations of quantum states. Many quantum resource theories, including the theory of entanglement, allow to define free operations on the level of Kraus operators $\{K_i\}$. In such a theory, a quantum operation is called free if it can be written as $\Lambda_\mathrm{f}[\rho] = \sum_iK_i\rho K_i^\dagger$, where the free Kraus operators $\{K_i\}$ need to fulfill certain constraints. In entanglement theory the free operations correspond to local operations and classical communication (LOCC), and a mathematical description of the free Kraus operators is challenging in general \cite{Horodecki_2009,Chitambar2014}. The free operation $\Lambda_{\mathrm f}$ is deterministic if the set of the Kraus operators is complete, i.e., $\sum_i K_i^\dagger K_i = \openone$. In general, the set of the Kraus operators does not need to be complete, leading to a free probabilistic transformation with $\sum_i K_i^\dagger K_i \leq \openone$. 
An important ingredient of every quantum resource theory is the set of free states $\FS$. Typically, this set corresponds to all states which can be easily prepared. In entanglement theory the free states are separable states \cite{Werner_1989,Horodecki_2009}. Every state which is not free is called resource state. Another important ingredient of any resource theory is the set of free operations. Free operations satisfy a necessary condition that they do not create resource states from free states, i.e., if $\Lambda$ is a free operation, then
\begin{equation}
    \Lambda[\rho_f] \in \FS \label{eq:LambdaF}
\end{equation}
for any free state $\rho_f$.

In general, not every operation which follows Eq.~(\ref{eq:LambdaF}) is a free operation. An example is the theory of entanglement, where some quantum operations do not create entanglement, but cannot be implemented with the means of Local Operations and Classical Communication (LOCC)~\cite{BennettPhysRevA.59.1070}. A more axiomatic approach to quantum resource theories is to make Eq.~(\ref{eq:LambdaF}) the only requirement for a free operation. The set of operations defined in this way is the maximal possible set within any meaningful resource theory, as any operation outside of this set will necessarily create a resource state from some free state. In the same spirit, one can also define a probabilistic free transformation ($\FPT$), such that $\FPT[\rho_f]/\tr(\FPT[\rho_f]) \in \FS$ for any free state $\rho_f$. Here, $\FPT[\rho] = \sum_j K_j \rho K_j^\dagger$ can be represented by 
 a (possibly incomplete) set of Kraus operators $\{K_j\}$ and the transformation probability is given by $p = \tr(\FPT[\rho])$. 

Given two quantum states $\rho$ and $\sigma$, deterministic free conversion $\rho \rightarrow \sigma$ is possible if $\sigma = \Lambda_{\mathrm f}[\rho]$ for some free operation $\Lambda_{\mathrm f}$. For every nontrivial resource theory there exist some pairs of states for which no deterministic free conversion is possible. In such cases, there is still a chance for probabilistic conversion, where the state $\rho$ is converted into $\sigma$ with some nonzero probability. The maximal probability for such probabilistic conversion is then defined as~\cite{IM2}
\begin{align}\label{eq:probablistic_exact}
P\left(\rho\rightarrow\sigma\right) & =\max_\FPT \left\{ \tr\left(\FPT\left[\rho\right]\right):\frac{\FPT\left[\rho\right]}{\tr\left(\FPT\left[\rho\right]\right)}=\sigma\right\},
\end{align}
and the maximum is taken over all free probabilistic transformations $\FPT$. While evaluating the optimal conversion probability is challenging in general, closed expressions have been obtained for special cases in various quantum resource theories, e.g. entanglement \cite{vidal_purestates,vidal_prob}, coherence \cite{PhysRevA.101.012313,coherence_pure,Erratum_coh_pure,thomas_coh}, and imaginarity \cite{Wu_PRL,IM2}. 
%In the resource theory of entanglement the optimal probability is known for transformations between bipartite pure states~\cite{vidal_purestates} and two-qubit states, when starting from a pure initial state \cite{vidal_prob}. In the resource theory of coherence, the optimal probability to get a pure state, starting from a arbitrary quantum state is given in ~\cite{PhysRevA.101.012313}. The problem has also been solved for all pure states for the resource theories of imaginarity~\cite{Wu_PRL,IM2} and coherence \cite{coherence_pure,Erratum_coh_pure}. For the latter, the optimal conversion probability is known for transformations between all (pure and mixed) states of a single qubit \cite{thomas_coh}. 
A powerful upper bound on the conversion probability $P$ is given by $P(\rho\rightarrow\sigma)\leq\min\left\{ \mathcal{R}(\rho)/\mathcal{R}(\sigma),1\right\}$, where $\mathcal R$ is any strongly monotonic resource quantifier~\cite{Vidaldoi:10.1080/09500340008244048,IM2}.

% Similarly, we can define the optimal probability of conversion under $\varepsilon$ resource generating operations as
% \begin{align}\label{eq:probablistic_exact_epsilon}
% P^{M, \varepsilon}\left(\rho\rightarrow\sigma\right) & =\max_{\Lambda_\varepsilon} \left\{ \tr\left(\Lambda_\varepsilon\left[\rho\right]\right):\frac{\Lambda_\varepsilon\left[\rho\right]}{\tr\left(\Lambda_\varepsilon\left[\rho\right]\right)}=\sigma\right\},
% \end{align}
% where $\Lambda_\varepsilon$ is a $\varepsilon$-resource ($M$) generating operation.

In most quantum resource theories there exist pairs of states which do not allow neither for deterministic nor probabilistic conversion. In such a case, it is always possible to achieve an approximate transformation between $\rho$ and $\sigma$ with conversion fidelity
\begin{equation}\label{eq:fidelity}
F(\rho\rightarrow\sigma)=\max_{\Lambda_{\mathrm f}}F(\Lambda_{\mathrm f}[\rho],\sigma),
\end{equation} 
where the maximum is taken over all free operations $\Lambda_{\mathrm f}$ and $F(\rho,\sigma)=[\tr(\sqrt{\rho}\sigma\sqrt{\rho})^{1/2}]^{2}$. Closed expressions for the optimal achievable fidelity have been noted for various resource theories, for examples, entanglement \cite{Vidal_approx}, coherence \cite{PhysRevResearch.4.023199}, imaginarity \cite{Wu_PRL,IM2}. 
%In the resource theory of entanglement, the optimal achievable fidelity between two bipartite pure states has been given in~\cite{Vidal_approx}. In the resource theory of coherence, the optimal achievable fidelity between an arbitrary state and a target pure state in given in ~\cite{PhysRevResearch.4.023199}. In the resource theory of imaginarity, the optimal fidelity is known for all (pure and mixed) initial states $\rho$, if the target state $\sigma$ is the maximally imaginary state \cite{Wu_PRL,IM2}. 
Results concerning approximate transformations in general resource theories have also been presented in~\cite{RegulaPhysRevA.101.062315}.

% Similar to before, we can define the optimal achievable fidelity under $\varepsilon$-resource ($M$) generating operations as
% \begin{equation}\label{fidelity_epsilon}
% F^{M, \varepsilon}(\rho\rightarrow\sigma)=\max_{\Lambda_\varepsilon}F(\Lambda_\varepsilon[\rho],\sigma),
% \end{equation} 
% where the maximum is taken over deterministic  $\varepsilon$-resource ($M$) generating operation, $\Lambda_\varepsilon$ and $F(\rho,\sigma)=[\tr(\sqrt{\rho}\sigma\sqrt{\rho})^{1/2}]^{2}$.

Probabilistic and approximate conversion are special cases of a more general transformation: while probabilistic conversion is assumed to produce a state with unit fidelity, approximate conversion is producing a state with unit probability. Very little is known about the intermediate regime between probabilistic and approximate transformations \cite{regula2021probabilistic,Fang_2019,Regula_distillation}. For investigating this intermediate regime, we define the \emph{fidelity for stochastic approximate state conversion}. It quantifies the maximal fidelity for the transformation from $\rho$ to $\sigma$ with a conversion probability at least $p$:
\begin{eqnarray}
F_{p}(\rho\rightarrow\sigma)=\max_{\FPT}\bigg\{ F\left(\frac{\FPT[\rho]}{\tr(\FPT[\rho])},\sigma\right):\nonumber\\
\tr\left(\FPT[\rho]\right)\geq p\bigg\}. \label{eq:FidelityPSC}
\end{eqnarray}
Here, the maximum is taken over all free probabilistic transformations $\FPT$. In the same spirit, we define the \emph{probability for stochastic approximate state conversion}, capturing the maximal conversion probability for a conversion with fidelity at least $f$:
\begin{eqnarray}
P_{f}(\rho\rightarrow\sigma)=\max_{\FPT}\bigg\{ &&\tr\left(\FPT[\rho]\right):\nonumber\\
&& F\left(\frac{\FPT[\rho]}{\tr\left(\FPT[\rho]\right)},\sigma\right)\geq f\bigg\}. \label{eq:ProbabilityPSC}
\end{eqnarray}

 %In this work, we consider a more general set of completely positive trace preserving operations, which are $\varepsilon$-resource generating operations. These operations generate atmost $\varepsilon$ amount of resource (according to some resource measure $M$). Technically speaking, $\FPT$ is a $\varepsilon$-resource generating operation iff $M(\FPT[\rho]/\tr(\FPT[\rho]))\leq \varepsilon$ for any free state $\rho$. Here the transformation probability is given by $p = \tr(\FPT[\rho])$. 
 
 In this work, we consider a more general set of operations ($\Lambda_\varepsilon$) which generate at most $\varepsilon$ amount of resource (according to some resource measure $M$) i.e., $M\left(\frac{\Lambda_\varepsilon[\rho_f]}{\tr \Lambda_\varepsilon[\rho_f]}\right)\leq \varepsilon$, for all free states $
 \rho_f$. We call such operations as $\varepsilon$-resource generating operations. Once again, for such operations, one can define the above quantities in Eqs. (\ref{eq:FidelityPSC}) and (\ref{eq:ProbabilityPSC}) as
\begin{eqnarray}
F^{M, \varepsilon}_{p}(\rho\rightarrow\sigma)=\max_{\Lambda_\varepsilon}\bigg\{ &&F\left(\frac{\Lambda_\varepsilon[\rho]}{\tr(\Lambda_\varepsilon[\rho])},\sigma\right):\nonumber\\
&&\tr\left(\Lambda_\varepsilon[\rho]\right)\geq p\bigg\}. \label{eq:FidelityPSC_epsilon}
\end{eqnarray}
and
\begin{eqnarray}
P^{M, \varepsilon}_{f}(\rho\rightarrow\sigma)=&&\max_{\Lambda_\varepsilon}\bigg\{ \tr\left(\Lambda_\varepsilon[\rho]\right):\nonumber\\
&&F\left(\frac{\Lambda_\varepsilon[\rho]}{\tr\left(\Lambda_\varepsilon[\rho]\right)},\sigma\right)\geq f\bigg\}
\end{eqnarray}
respectively. Note that, here the maximisation is taken over all $\varepsilon$-resource generating operations $\Lambda_{\varepsilon}$.

We will now introduce three resource measures, the geometric measure $G$, the generalised robustness $R$ and the standard robustness $S$, which will be used later:
\begin{align}
G(\rho) & =1-\max_{\sigma\in\FS}F(\rho,\sigma)\\
R(\rho) & =\min_{\tau}\left\{ s\geq0:\frac{\rho+s\tau}{1+s}\in\FS\right\}\,\,\text{and}\\
S(\rho) & =\min_{\tau\in\FS}\left\{ s\geq0:\frac{\rho+s\tau_f}{1+s}\in\FS\right\}.
\end{align}
All the above quantities are non-negative, vanish for free states, and are non-increasing under free operations \cite{Chitambar_2019}. The quantifiers $G$, $R$ and $S$ have been initially introduced in the context of entanglement \cite{Vidal_robustness,SteinerPhysRevA.67.054305,HarrowPhysRevA.68.012308,Shimony1995,Barnum2001,Wei_2003,Alex_Linking}, and recently also found applications in general quantum resource theories~\cite{TakagiPhysRevLett.122.140402,TakagiPhysRevX.9.031053}, including the resource theories of coherence~\cite{NapoliPhysRevLett.116.150502,StreltsovPhysRevLett.115.020403} and imaginarity~\cite{Wu_PRL,IM2,Gour2018}. Both the generalised robustness ($R$) and geometric measure ($G$) can be alternatively written in the following way \cite{datta_robust, wilde_book}
\begin{eqnarray}
&& E_{1/2}(\rho) =-\log_2(1-G(\rho))=-\log_2F_{\max}(\rho),\nonumber\\
\\
&& E_{\max}(\rho)  =\log_2(1+R(\rho))
\end{eqnarray}
where,
\begin{eqnarray}
    && E_{1/2}(\rho) =\min_{\sigma\in\FS} D_{1/2}(\rho||\sigma),\\
    && E_{\max}(\rho) = \min_{\sigma\in\FS}D_{\max}(\rho||\sigma).
\end{eqnarray}
Here, $D_{\max}(\rho||\sigma)=\lim\limits_{\alpha\rightarrow\infty}D_{\alpha}(\rho||\sigma)$, where $D_{\alpha}$ is the sandwiched R\'{e}nyi relative entropy. It is known that, for all $\rho$ and $\sigma$, $D_{\alpha}(\rho||\sigma)\leq D_{\alpha'}(\rho||\sigma)$ for all $\alpha\leq\alpha'$ and $D_{\alpha}(\rho||\sigma)\leq D_{\alpha}(\Lambda(\rho)||\Lambda(\sigma))$ for every CPTP map $\Lambda$ and $\alpha\in[\frac{1}{2},\infty)$ (see section 7.5 of \cite{wilde_book} for extensive discussion on sandwiched R\'{e}nyi entropies). Therefore,
\begin{equation}
     D_{1/2}(\rho||\sigma)\leq D_{\max}(\rho||\sigma).
\end{equation}
and
\begin{eqnarray}
    \min_{\sigma\in\FS} D_{1/2}(\rho||\sigma)\leq D_{1/2}(\rho||\sigma')&&\leq D_{\max}(\rho||\sigma')\\
    &&=\min_{\sigma\in\FS}D_{\max}(\rho||\sigma).\nonumber
\end{eqnarray}
This shows that for any state $\rho$, 
\begin{eqnarray}
     E_{1/2}(\rho)\leq E_{\max}(\rho).
\end{eqnarray}
This can be equivalently written as,
\begin{equation}
R(\rho) \geq \frac{G(\rho)}{1-G(\rho)}. \label{r,g_bound}
\end{equation}
A similar inequality between the resource robustness and the geometric measure (when the set of free states is obtained as the convex hull of free pure states) has been reported in \cite{Regula_Linking,Cavalcanti_2006}: $R(\rho)\geq\tr(\rho^{2})/[1-G(\rho)]-1$. It is straightforward to see that this bound coincides with Eq.~(\ref{r,g_bound}) for all pure states. Moreover, Eq.~(\ref{r,g_bound}) gives an improvement whenever $\rho$ is not pure and $G(\rho) > 0$. An analogous bound to Eq.~(\ref{r,g_bound}) also holds for quantum channels (see Appendix \ref{app_channel_bound}). 

Equipped with these tools we are now ready to prove the following theorem.
\begin{theorem}\label{thm:GeneralBound_resource_generating}
    For any quantum resource theory and any two states $\rho$ and $\sigma$ the following inequalities hold:
 \begin{align}
F^{M, \varepsilon}_{p}\left(\rho\rightarrow\sigma\right) & \leq\min\left\{\frac{1}{p}\left[1+R(\rho)\right]\times F^{M, \varepsilon}_{\max}(\sigma),1\right\},\label{eq:Main3}\\
P^{M, \varepsilon}_{f}\left(\rho\rightarrow\sigma\right) & \leq\min\left\{\frac{1}{f}\left[1+R(\rho)\right]\times F^{M, \varepsilon}_{\max}(\sigma),1\right\}.\label{eq:Main4}
\end{align}
Here, $F^{M, \varepsilon}_{\max}(\sigma)=\max\limits_{\sigma_{\varepsilon}: M(\sigma_{\varepsilon})\leq \varepsilon}F(\sigma,\sigma_{\varepsilon})$.
\end{theorem}
The proof of the above theorem can be found in Appendix \ref{app_theorem1}. Theorem~\ref{thm:GeneralBound_resource_generating} provides general upper bounds for the fidelity and probability for stochastic approximate state conversion, which are valid for any quantum resource theory. Going one step further, we can also allow for an error in the intial state $\rho$. From now on, we always consider $M$ to be either generalised robustness ($R$) or standard robustness ($S$). Robustness (generalised) generating operations have been previously studied, in the context of performing reversible manipulation of quantum resources \cite{Berta2023gapinproofof,PhysRevLett.115.070503,regula2024reversibility}. Note that, when $M$ is either standard or generalised robustness, the following holds
\begin{equation}\label{simple_rob}
    F^{M, \varepsilon}_{\max}(\sigma)\leq F_{\max}(\sigma)(1+\varepsilon).
\end{equation}
% (iii) Geometric measure ($G$)
% \begin{equation}
%     F^{G_\varepsilon}_{\max}(\sigma)\leq F_{\max}(\sigma)(1+\varepsilon).
% \end{equation}
The proof for the above statement is in Appendix \ref{app_asymptotic}. For the special case, when $\varepsilon=0$ i.e., for resource non-generating operations, we arrive at the following result.
\begin{corollary} \label{thm:GeneralBound}
For any quantum resource theory and any two states $\rho$ and $\sigma$ the following inequalities hold:
\begin{eqnarray}
    F_{p}\left(\rho\rightarrow\sigma\right) && \leq\min\left\{\frac{1}{p}\left[ 2^{E_{\max}(\rho) -E_{1/2}(\sigma)} \right],1\right\}\label{eq:Main1} \\
    P_{f}\left(\rho\rightarrow\sigma\right) && \leq \min\left\{\frac{1}{f}\left[ 2^{E_{\max}(\rho) -E_{1/2}(\sigma)} \right] ,1\right\}.\label{eq:Main2}
\end{eqnarray}
% \begin{align}
% F_{p}\left(\rho\rightarrow\sigma\right) & \leq\min\left\{\frac{1}{p}\left[1+R(\rho)\right]\times\left[1-G(\sigma)\right],1\right\},\label{eq:Main1}\\
% P_{f}\left(\rho\rightarrow\sigma\right) & \leq\min\left\{\frac{1}{f}\left[1+R(\rho)\right]\times\left[1-G(\sigma)\right],1\right\}.\label{eq:Main2}
% \end{align}
\end{corollary}
Cases of special interest are deterministic conversion with $p=1$ and exact conversion with $f=1$. In these cases Eqs.~(\ref{eq:Main1}) and (\ref{eq:Main2}) reduce to 
\begin{align}
F\left(\rho\rightarrow\sigma\right) & \leq\min\left\{ 2^{E_{\max}(\rho) -E_{1/2}(\sigma)} ,1\right\}, \label{eq:GeneralBoundDeterministic}\\
P\left(\rho\rightarrow\sigma\right) & \leq\min\left\{ 2^{E_{\max}(\rho) -E_{1/2}(\sigma)} ,1\right\}.
\end{align}

% As an immediate consequence of Theorem~\ref{thm:GeneralBound}, we can derive a bound on the resource robustness in terms of the geometric measure. To see this, we choose $\sigma = \rho$ in Eq.~(\ref{eq:GeneralBoundDeterministic}). Since $F (\rho \rightarrow \rho) = 1$, it follows that
% \begin{equation}
% R(\rho) \geq \frac{G(\rho)}{1-G(\rho)}. \label{r,g_bound}
% \end{equation}
% A similar inequality between the resource robustness and the geometric measure (when the set of free states is obtained as the convex hull of free pure states) has been reported in \cite{Regula_Linking,Cavalcanti_2006}: $R(\rho)\geq\tr(\rho^{2})/[1-G(\rho)]-1$. It is straightforward to see that this bound coincides with Eq.~(\ref{r,g_bound}) for all pure states. Moreover, Eq.~(\ref{r,g_bound}) gives an improvement whenever $\rho$ is not pure and $G(\rho) > 0$.
\noindent Note that the above bounds are non-trivial whenever $E_{\max}(\rho) < E_{1/2}(\sigma)$ and they only depend on the free states of the theory and are independent from the particular choice of free operations. The bound given in Eq. (\ref{eq:Main1}) has been previously reported for pure target states \cite{Regula_distillation} and shown to outperform the best known bounds in certain scenarios. \

Along the same line, we can also provide a deterministic bound for transformation of channels, as below.
% Precisely, we give an upper bound on fidelity of converting a channel $\mathcal{E}$ into $\mathcal{E}_t$ by applying a free superchannel $\Theta_f$ onto it: \chandan{[If we are using $\varepsilon$ resource generating operations, then probably we need to consider resource generating channels instead of free superchannels, right?]}
\begin{eqnarray}\label{channelbound_main1}
F^{M, \varepsilon}\left(\Theta(\mathcal{E}),\mathcal{E}_t\right)\leq \min\Big\{ &&(1+R(\mathcal{E}))\times \nonumber\\
&& F^{M_{\varepsilon}}_{\max}(\mathcal{E}_t) ,1\Big\}.
\end{eqnarray}
Here, $\mathcal{E}$ is the initial channel, $\mathcal{E}_t$ is the target channel and $\Theta$ is a $\varepsilon$-resource generating super-channel and therefore for the case of resource non-generating super-channels ($\varepsilon=0$), we have
\begin{eqnarray}\label{channelbound_main2}
F\left(\Theta_f(\mathcal{E}),\mathcal{E}_t\right)\leq \min\Big\{ &&(1+R(\mathcal{E}))\times\nonumber\\&&( 1-G(\mathcal{E}_t)) ,1\Big\}.
\end{eqnarray}
The proof, along with all the relevant definitions can be found in Appendix \ref{app_channel_bound}.

Note that both the quantities appearing in the above bounds i.e, $E_{\max}$ (generalised robustness) and $E_{1/2}$ (geometric measure) can be computed as convex optimization problems whenever the set of free states (or channels) is convex (this is true for most of the known resource theories). Additionally, when the free set is SDP representable, as in the case for state based resource theories like imaginarity, coherence, asymmetry \cite{PhysRevA.93.042107,Wu_PRL} and channel based resource theories like measurement incompatibility, nonlocality \cite{https://doi.org/10.48550/arxiv.2205.08546,PhysRevLett.125.210402}, both these quantities can be computed by a semidefinite program (SDP).

 In many cases these bounds may not be tight and better bounds can be found for specific setups. For example, distillation of two-qubit entanglement by LOCC, where the optimal fidelity of achieving a maximally entangled two-qubit state ($\ket{\phi^+} = (\ket{00}+\ket{11})/\sqrt 2$) for a given probability $p$ is given by 
 \begin{equation}\label{eq:distill_singlet_locc}
      F_{p}\left(\rho\rightarrow\phi_{+}\right)\leq\min\left\{\frac{R(\rho)}{2p}+\frac{1}{2},1\right\}.
 \end{equation}
  The details can be found in Appendix \ref{app_distill_locc}. Note that for $p<1$, the bound is always tighter than the bound given in Eq. (\ref{eq:Main1}). For the case when target state is a maximally entangled state of arbitrary dimension, one can provide a tighter (compared to Eq. (\ref{eq:Main2})) SDP-computable upperbound on the achievable fidelity of transformation via LOCC, given a lower bound on the probability of success \cite{PhysRevA.97.062333}.

  The channel bounds in Eqs. (\ref{channelbound_main1}) and (\ref{channelbound_main2}) go beyond the bounds presented in \cite{Regula_distillation,RegulaPhysRevA.101.062315}. This is because, the previous bounds presented for channel transformations, assume that the target channel is ``pure'' (pure channels map all pure sates into pure states). In many channel based resource theories the ``standard target channels'' are not pure \cite{ryuji_oneshot}. For example, in the case of resource theory of measurement incompatibility, mutually unbiased basis measurements correspond to the standard target resource and in the case of Bell nonlocality, the standard target channels are the PR boxes, very much like singlet in the resource theory of entanglement \cite{ryuji_oneshot,PhysRevLett.95.140401}. Both the previously mentioned standard target channels are not pure. Therefore, the previously known bounds from \cite{Regula_distillation,RegulaPhysRevA.101.062315} do not apply to these cases. However, the channel bound from Eq. (\ref{channelbound_main2}) can be used in these scenarios, as we do not put any restriction of purity on the target channel. As an example, we compute the bound for distillation of a PR box ($\mathcal{B}_{\textrm{pr}}$) correlations starting from an isotropic box $\mathcal{B}_{\eta}=\eta\mathcal{B}_{\textrm{pr}}+(1-\eta)\mathcal{B}_{\textrm{random}}$, where $\eta\in[0,1]$ and $\mathcal{B}_{\textrm{random}}$ represents a uniformly distributed box. For a general scenario, computation of the robustness of Bell-nonlocality ($R_{\textrm{nl}}$) for a given box, is an SDP \cite{PhysRevLett.125.210402}. For $\mathcal{B}_{\eta}$, the analytic expression for robustness is given by $R_{\textrm{nl}}(B_{\eta})=\frac{2\eta -1}{3}$ for $\eta \geq \frac{1}{2}$ and $R_{\textrm{nl}}(B_{\eta})= 0$ for $\eta<\frac{1}{2}$\cite{Wolfe_2020}. In Appendix \ref{app_prbox_ditill}, we show that the closest box (largest fidelity) to $\mathcal{B}_{\textrm{pr}}$ in the set of local boxes is $\mathcal{B}_{\frac{1}{2}}=(\mathcal{B}_{\textrm{pr}}+\mathcal{B}_{\textrm{random}})/2$ (note that $\mathcal{B}_{\frac{1}{2}} \equiv \mathcal{B}_{\eta}\,\,\text{when}\,\,\eta=\frac{1}{2}$) and the fidelity between them is $3/4$. Equivalently, the geometric measure of the PR box is $1/4$. Therefore, the upper bound (see Eq. (\ref{channelbound_main2})) on the approximate distillation of a PR box from an isotropic box is given by
\begin{equation}\label{PR_bound}
    F(\mathcal{B}_{\eta}\rightarrow\mathcal{B}_{\textrm{pr}})\leq \frac{\eta +1}{2}\,\,\text{for}\,\,\frac{1}{2}\leq\eta\leq 1.
\end{equation}
When $\eta\leq1/2$, $\mathcal{B}_{\eta}$ is a local box and the optimal fidelity is achieved by going to  $\mathcal{B}_{\frac{1}{2}}$, which is $3/4$. Note that, the fidelity between $\mathcal{B}_{\eta}$ and $\mathcal{B}_{\textrm{pr}}$ is $\frac{\eta +1}{2}$. Therefore, the bound in Eq. (\ref{PR_bound}) is tight and shows that locality preserving superchannels do not increase the fidelity between isotropic box and PR box (when $1/2\leq\eta\leq 1$). This agrees with the fact that, nonlocality distillation is impossible from an isotropic box. See Appendix \ref{app_prbox_ditill} for the detailed calculation.

In the following, we discuss another example with a class of correlated non-local boxes which are expressed as
\begin{equation}
    \mathcal{B}^{\textrm{c}}_{\eta} = \eta \mathcal{B}_{\textrm{pr}}+(1-\eta)\mathcal{B}^{\textrm{c}},
\end{equation}
where, 
\begin{equation}
    \mathcal{B}^{c}(ab|xy)=
    \begin{cases}
    \frac{1}{2} \,\, \mbox{for}\,\, a\oplus b=0 \\
    0 \,\, \mbox{otherwise}
    \end{cases}
\end{equation}
is a correlated local box. For such non-local boxes ($ \mathcal{B}^{\textrm{c}}_{\eta}$), the value of CHSH violation is given by $2(\eta +1)$. Therefore, from \cite{Wolfe_2020}, it follows that, $R(\mathcal{B}^{\textrm{c}}_{\eta})=\frac{2\eta}{2\eta +4}$. Therefore, the upper bound (see Eq. (\ref{channelbound_main2})) on approximate distillation of a PR box from an correlated box is given by
\begin{equation}\label{PR_bound}
    F(\mathcal{B}^{\textrm{c}}_{\eta}\rightarrow\mathcal{B}_{\textrm{pr}})\leq \frac{3(\eta+1)}{2\eta+4}\,\,\text{for}\,\,0\leq\eta\leq 1.
\end{equation}
Whereas, $F(\mathcal{B}^{\textrm{c}}_{\eta}$, $\mathcal{B}_{\textrm{pr}})=\eta<\frac{3(\eta+1)}{2\eta+4}$ for all $\eta\in[0,1)$. This gives an upper bound to the achievable fidelity between a correlated non-local box and a PR box.

  \medskip
\section{Asymptotic and many-copy bounds} We now study the nature of our bounds in the asymptotic limit. As we show now, our single-copy bounds imply upper bounds on the asymptotic transformation rates in general resource theories. The deterministic rate for a transformation between $\rho$ and $\sigma$ is given by
\begin{eqnarray}\label{det_asymp}
     r(\rho\rightarrow &&\sigma) =
\sup\Big\{r: \nonumber\\
    && \lim_{n\rightarrow\infty}\inf_{\Lambda_f}\norm{\Lambda_{f}(\rho^{\otimes n})-\sigma^{\otimes \lfloor rn\rfloor}}_{1}=0\Big\}.
\end{eqnarray}
Here, the infimum is taken over the set of deterministic free operations ($\Lambda_f$). We now generalise the above definition to the probabilistic case where the probability of success is not allowed to decay too fast (exponentially in the number of copies). We also allow for generation of small amounts of resource, quantified by  $M$. Here, small amounts mean sub-exponential in the number of copies of the state. In such a scenario, we can define asymptotic rates as follows
% \begin{equation}\label{prob_asymp}
%     r_{p}(\rho\xrightarrow{}\sigma)=\sup\{r:\lim_{n\xrightarrow{}\infty}\inf_{\mathcal{E}_f}\norm{\frac{\mathcal{E}_f(\rho^{\otimes n})}{\tr \mathcal{E}_f(\rho^{\otimes n})}-\sigma^{\otimes \lfloor rn\rfloor}}_{1}=0\}
% \end{equation}
% Here, $\lim\limits_{n\xrightarrow{}\infty}-\frac{\log \tr \mathcal{E}_f(\rho^{\otimes n})}{n}=0$. We now additionally allow for generation of small amounts of entanglement, quantified by  $M$. Here, small amounts mean sub-exponential in the number of copies of the state. In such a scenario, we define asymptotic rates as follows
\begin{eqnarray}\label{gen_asymp}
    &&r^M_{p}(\rho\rightarrow\sigma)=\sup\bigg\{r:\nonumber\\
    && \lim_{n\rightarrow\infty}\inf_{\Lambda_{\varepsilon_{n}}}\norm{\frac{\Lambda_{\varepsilon_{n}}(\rho^{\otimes n})}{\tr \Lambda_{\varepsilon_{n}}(\rho^{\otimes n})}-\sigma^{\otimes \lfloor rn\rfloor}}_{1}=0\bigg\}
\end{eqnarray}
Here, $\Lambda_{\varepsilon_{n}}$ are $\varepsilon_n$-resource (quantified by $M$) generating operations, $\lim\limits_{n\rightarrow\infty}\frac{\log\varepsilon_n}{n}= 0$ and $\lim\limits_{n\rightarrow\infty}-\frac{\log \tr \Lambda_{\varepsilon_{n}}(\rho^{\otimes n})}{n}=0$. Once we have the above definitions, we now present our theorem, which gives an upper bound on the achievable rate.
\begin{theorem}\label{thm:asymptotic}
For any quantum resource theory and any two states $\rho$ and $\sigma$, such that $E_{1/2}(\sigma^{\otimes n})=n\cdot E_{1/2}(\sigma)$, it holds that
\begin{equation}
    r^{M}_{p}(\rho\rightarrow\sigma)\leq \frac{E_{\max}(\rho)}{E_{1/2}(\sigma)},
    \end{equation}
    where $M$ can be $R$ (generalised robustness) or $S$ (standard robustness).
\end{theorem}
The proof of this theorem can be found in Appendix \ref{app_asymptotic}. Additionally, we also show that (in Appendix \ref{app_asymptotic}), whenever we try to achieve a rate $r > r^{M}_{p}(\rho\rightarrow\sigma)=\frac{E_{\max}(\rho)}{E_{1/2}(\sigma)}$, the error of transformation exponentially increases with $n$. Note that the assumption $E_{1/2}(\sigma^{\otimes n})=n\cdot E_{1/2}(\sigma)$ (additivity of $E_{1/2}$) holds for various classes of states
in several resource theories. For example, in the resource theory of coherence \cite{Zhu_2017}, in the resource theory of magic, for single qubit states and for pure states \cite{rubboli2023mixedstate} and in the resource theory of entanglement, for
pure, maximally correlated, GHZ, Bell diagonal, isotropic, and generalized Dicke states \cite{https://doi.org/10.48550/arxiv.2211.12804}.
Recently in Ref. \cite{PhysRevA.107.042401}, the authors show that probabilistic protocols of entanglement distillation can asymptotically transform certain bipartite quantum states (lets say $\rho$) into $\ket{\phi^+} = (\ket{00}+\ket{11})/\sqrt 2$, with a rate larger than $E_{\max}(\rho)$ (note that $E_{1/2}(\ket{\phi^+}\bra{\phi^+})=1$). Using Theorem \ref{thm:asymptotic}, one can say that any such protocol would require the success probability to vanish exponentially in the number of copies of $\rho$. Thus providing an immediate application of Theorem \ref{thm:asymptotic}. 

Alternatively, one can consider a scenario when one aims to transform $n$ copies of the initial $\rho$ into $m$ copies of the target state $\sigma$ via $\varepsilon$-resource generating operations (resource measured as standard or generalised robustness). In such a case, the bounds in Eq. (\ref{eq:Main3}) and Eq. (\ref{eq:Main4}), along with Eq. (\ref{simple_rob})  gives 
\begin{eqnarray}
    &&F^{M,\varepsilon}_{p}\left(\rho^{\otimes n}\rightarrow\sigma^{\otimes m}\right)\nonumber\\ && \leq \min\left\{\frac{1+\varepsilon}{p}\left[ 2^{E_{\max}(\rho^{\otimes n}) -E_{1/2}(\sigma^{\otimes m})}\right] ,1\right\}, \\
    &&P^{M,\varepsilon}_{f}\left(\rho^{\otimes n}\rightarrow\sigma^{\otimes m}\right)\nonumber\\ && \leq \min\left\{\frac{1+\varepsilon}{f} \left[2^{E_{\max}(\rho^{\otimes n}) -E_{1/2}(\sigma^{\otimes m})}\right] ,1\right\}.
\end{eqnarray}
In order to evaluate the above given bounds, one has to evaluate the resource measures on multiple copies of the intial and target states. But thanks to the sub-additivity of $E_{\max}$ and (assumed) additivity of $E_{1/2}$, one can provide alternative bounds, such that the resource measures ($E_{\max}$ and $E_{1/2}$) need to be evaluated only on a single copy of initial and target state respectively. 
\begin{eqnarray}
    && F^{M,\varepsilon}_{p}\left(\rho^{\otimes n}\rightarrow \sigma^{\otimes m}\right) \nonumber\\
    &&\leq \min\left\{\frac{1+\varepsilon}{p}\left[ 2^{n E_{\max}(\rho) - m E_{1/2}(\sigma)}\right] ,1\right\}, \\
    &&P^{M,\varepsilon}_{f}\left(\rho^{\otimes n}\rightarrow\sigma^{\otimes m}\right)\nonumber\\ && \leq \min\left\{\frac{1+\varepsilon}{f}\left[ 2^{n E_{\max}(\rho) - m E_{1/2}(\sigma)} \right],1\right\}.
\end{eqnarray}

\section{Pure entangled state transformations} 
The definition of the fidelity and probability for probabilistic approximate state conversion in Eqs.~(\ref{eq:FidelityPSC}) and (\ref{eq:ProbabilityPSC}) suggests that an analytic expression for $F_p$ and $P_f$ is out of reach in most setups. As we will see below, an analytic expression can indeed be found for relevant setups within the resource theory of entanglement. Previously, bipartite state transformations for probabilistic exact (see Eq. (\ref{eq:probablistic_exact})) \cite{vidal_purestates} and deterministic approximate (see Eq. (\ref{eq:fidelity})) \cite{Vidal_approx} scenarios have been explored. However, one can consider a more general scenario, where there is an interplay between achievable probability and fidelity. Here, in Theorem~\ref{thm:pureQuditConversion}, we explore this scenario and provide a complete solution for probabilistic approximate bipartite pure state transformations. Let us define $n=\max\{\text{Sch}(\psi),\text{Sch}(\phi)\}$. Note that $\{\alpha_i\}$ and $\{\beta_i\}$ (squared Schmidt coefficients in decreasing order) can both be considered to be $n$-dimensional column vectors. This can be done by adding $n-\text{Sch}(\psi)$ zeros to $\{\alpha_i\}$ if $\text{Sch}(\psi)<n$ and when $\text{Sch}(\phi)<n$, one adds $n-\text{Sch}(\phi)$ zeros to $\{\beta_i\}$. For the sake of convenience, from now onwards we drop the labels of the systems of Alice and Bob. As an example, we shall use $\ket{\psi}$ instead of $\ket{\psi^{AB}}$.

\begin{theorem}\label{thm:pureQuditConversion}
The optimal fidelity to convert a bipartite pure state $\ket{\psi}$ into another state $\ket{\phi}$ via LOCC with a probability $p$ is given by
\begin{equation}\label{Fidelity_arbitrary}
    F_p(\psi\rightarrow\phi)=\min_{l\in\{2,\cdots,n\}}\left\{1-4\left( E_{l}^{\phi}-\frac{ E_{l}^{\psi}}{p}\right)^2\right\},
\end{equation}
where $E^{\psi}_l=\sum_{i=l}^{n}\alpha_i$,
   $E^{\phi}_l=\sum_{i=l}^{n}\beta_i$ and $\{\alpha_i\}$, $\{\beta_i\}$ are the squared Schmidt coefficients of $\psi$ and $\phi$ in decreasing order. 
\end{theorem}

\noindent For the proof see Appendix \ref{app_proof_pure_bipartite}. In the above scenario, a closed expression for the maximum conversion probability $P_f$ for a given fidelity can also be found: 
\begin{eqnarray}\label{Probability_arbitrary}
    &&P_f(\psi\rightarrow\phi)=\nonumber\\
    &&\min\left\{1,\min_l \left\{\frac{E_l^{\psi}}{\max\left\{E_l^{\phi}- \frac{\sqrt{1-f}}{2},0\right\}}\right\} \right\},
\end{eqnarray}
where $l$ ranges from 2 to $n$. Details can be found in Appendix \ref{app_proof_pure_bipartite}. Therefore, the results in Eq. (\ref{Fidelity_arbitrary}) and (\ref{Probability_arbitrary}) provide a complete solution for the bipartite pure entangled state transformations. Note that by putting $p=1$ in Eq. (\ref{Fidelity_arbitrary}) and $f=1$ in Eq. (\ref{Probability_arbitrary}), we get back the known result for deterministic approximate transformations given in \cite{Vidal_approx} and probabilistic exact transformations given in \cite{vidal_purestates} respectively. 

\medskip

\section{Entangled states of two qubits}In the following, we consider another scenario, where the initial state is pure bipartite and the target state is an arbitrary two-qubit state. Here also exact expressions of $F_p$ and $P_f$ can be found. Before presenting the results, we provide tight continuity bounds on geometric entanglement in two-qubit settings, which will be utilised in the derivation of the analytical expressions for $F_p$ and $P_f$.

\begin{lemma} \label{thm:TwoQubitsContinuity}
 For a two-qubit state $\rho$ consider a set of states $S_{\rho,f}$ such that $F(\rho,\rho')\geq f$ for all $\rho' \in S_{\rho,f}$. The minimal geometric entanglement in $S_{\rho,f}$ is given by 
 \begin{eqnarray}
 &&\min_{\rho'\in S_{\rho,f}}G(\rho')= \nonumber\\ &&\sin^{2}\left(\max\left\{ \sin^{-1}\!\sqrt{G(\rho)}-\cos^{-1}\!\sqrt{f},0\right\} \right).\nonumber\\
 \end{eqnarray}
 For pure two-qubit states $\ket{\psi}$, the maximal geometric entanglement in $S_{\psi,f}$ is given by 
 \begin{eqnarray}
     &&\max_{\rho'\in S_{\psi,f}}G(\rho')  =\nonumber\\
     &&\sin^{2}\left(\min\left\{ \sin^{-1}\!\sqrt{G(\psi)}+\cos^{-1}\!\sqrt{f},\frac{\pi}{4}\right\} \right),\nonumber\\
 \end{eqnarray}
 where $\psi=\ket{\psi}\!\bra{\psi}$ denotes a projector onto a pure states $\ket{\psi}$.
 \end{lemma}

 See Appendix \ref{app_tight_continuinty} for the proof. These bounds shed new insight into the features of two-qubit entanglement. We are now ready to give a complete solution for the stochastic approximate state conversion in two-qubit systems if the initial state is pure.

\begin{theorem} \label{thm:QubitConversion}
The maximal probability to convert a pure two-qubit state $\ket{\psi}$ into a two-qubit state $\rho$ via LOCC with a fidelity $f$ is given by
\begin{eqnarray}\label{eq:optimal_probability_entanglement}
&&P_{f}(\psi\rightarrow\rho)\nonumber\\ &&= \begin{cases}
1\,\,\,\mathrm{for}\,\,\,m_1\geq0\\
\frac{G(\psi)}{\sin^{2}\left(\sin^{-1}\sqrt{G(\rho)}-\cos^{-1}\sqrt{f}\right) }\,\,\,\mathrm{otherwise},
\end{cases}
\end{eqnarray}
where $m_1=\sin^{-1}\sqrt{G(\psi)}-\sin^{-1}\sqrt{G(\rho)} +\cos^{-1}\sqrt{f}$.
\end{theorem}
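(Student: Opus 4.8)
The plan is to reduce the stochastic‑approximate conversion to a family of \emph{exact} probabilistic conversions and then combine the known two‑qubit probabilistic formula with the continuity bound of Theorem~\ref{thm:TwoQubitsContinuity}. First I would establish the reduction
\[
P_{f}(\psi\rightarrow\rho)=\max_{\rho':\,F(\rho',\rho)\geq f}P(\psi\rightarrow\rho'),
\]
where the maximum runs over all two‑qubit states $\rho'$ in the fidelity ball $S_{\rho,f}$. The inequality ``$\geq$'' is immediate, since every exact conversion $\psi\rightarrow\rho'$ with $F(\rho',\rho)\geq f$ is itself an admissible stochastic‑approximate protocol. For ``$\leq$'', any $\FPT$ meeting the fidelity constraint produces the output $\rho'=\FPT[\psi]/\tr\FPT[\psi]\in S_{\rho,f}$ with probability $\tr\FPT[\psi]\leq P(\psi\rightarrow\rho')$, so the approximate probability cannot exceed the best exact probability over the ball.

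Next I would invoke the complete solution for two‑qubit probabilistic conversion from a pure initial state~\cite{vidal_prob}, written in terms of the geometric entanglement as
\[
P(\psi\rightarrow\rho')=\min\left\{\frac{G(\psi)}{G(\rho')},\,1\right\}.
\]
For a pure target this is Vidal's ratio of smallest Schmidt coefficients, since $G(\psi)=\lambda_{\min}$ for two‑qubit pure states; the substantive content is the mixed‑target extension. Substituting into the reduction gives
\[
P_{f}(\psi\rightarrow\rho)=\min\left\{\frac{G(\psi)}{\min_{\rho'\in S_{\rho,f}}G(\rho')},\,1\right\},
\]
so that maximizing the probability becomes equivalent to \emph{minimizing} the geometric entanglement over the fidelity ball.

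The inner minimization is precisely the quantity evaluated in Theorem~\ref{thm:TwoQubitsContinuity}, namely $\min_{\rho'\in S_{\rho,f}}G(\rho')=\sin^{2}\!\left(\max\left\{\sin^{-1}\!\sqrt{G(\rho)}-\cos^{-1}\!\sqrt{f},\,0\right\}\right)$. I would then close with a short case analysis. If $\sin^{-1}\!\sqrt{G(\rho)}-\cos^{-1}\!\sqrt{f}\leq0$, the minimizer is separable (geometric entanglement $0$), the fidelity ball reaches $\FS$, and $P_{f}=1$; otherwise the ratio becomes $G(\psi)/\sin^{2}(\sin^{-1}\!\sqrt{G(\rho)}-\cos^{-1}\!\sqrt{f})$, capped at unity. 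Applying $\sin^{-1}\!\sqrt{\,\cdot\,}$ to the saturation condition $G(\psi)\geq\sin^{2}(\sin^{-1}\!\sqrt{G(\rho)}-\cos^{-1}\!\sqrt{f})$ turns it into $m_{1}\geq0$, so the cap operates exactly on $m_{1}\geq0$ and the quotient survives when $m_{1}<0$. This reproduces both branches of~\eqref{eq:optimal_probability_entanglement}.

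The hard part will be the mixed‑target probabilistic formula $P(\psi\rightarrow\rho')=\min\{G(\psi)/G(\rho'),1\}$ with the \emph{fidelity‑based} geometric measure used throughout the paper. Its upper‑bound direction demands the probability‑weighted monotonicity $\tr\FPT[\psi]\cdot G(\rho')\leq G(\psi)$ under stochastic LOCC acting on a pure state, while the lower bound requires an explicit LOCC protocol that actually attains the Theorem~\ref{thm:TwoQubitsContinuity} minimizer with the claimed probability. Since that minimizer lies on the geodesic toward $\FS$ and is generically \emph{mixed}, achievability cannot be settled by pure‑state Nielsen/Vidal theory alone; one must confirm that for two qubits the direct geometric measure functions as the relevant strong monotone and that its bound is saturated by a concrete filtering‑plus‑mixing construction.
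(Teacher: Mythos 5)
Your proposal is correct and follows essentially the same route as the paper: reduce $P_f$ to the exact probabilistic conversion $P(\psi\rightarrow\rho')$ optimized over the fidelity ball $S_{\rho,f}$, invoke Vidal's two-qubit formula $P(\psi\rightarrow\rho')=\min\{G(\psi)/G(\rho'),1\}$ from \cite{vidal_prob}, insert the minimum of $G$ over $S_{\rho,f}$ from Theorem~\ref{thm:TwoQubitsContinuity}, and close with the case analysis on $m_1$. The ``hard part'' you flag (the mixed-target probabilistic formula with the fidelity-based geometric measure) is handled in the paper simply by citing \cite{vidal_prob}, exactly as you do, so no additional work is needed there.
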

\noindent We refer to Appendix  \ref{app_fid_prob_two_qubit} for more details. While Theorem \ref{thm:QubitConversion} gives a closed expression for the maximal conversion probability $P_f$, it is also possible to obtain a closed expression for the maximal fidelity $F_p$: 
\begin{eqnarray}
&&F_{p}(\psi\rightarrow\rho)=\nonumber\\
&&\begin{cases}
1\,\,\,\mathrm{for}\,\,\,p\leq\frac{G(\psi)}{G(\rho)},\\
\cos^{2}\left[\sin^{-1}\!\sqrt{G(\rho)}-\sin^{-1}\!\sqrt{\frac{G(\psi)}{p}}\right]\,\,\mathrm{otherwise}. \label{eq:optimal_fidelity_entanglement}
\end{cases}
\end{eqnarray}
Note that the above result also holds if the initial pure state $\ket{\psi}$ is of a dimension $\mathcal{C}^{m}\otimes\mathcal{C}^{n}$, where $m,n \geq 2$. Here, $G(\psi) = 1-\lambda^{\psi}_{1}$ and $\lambda^{\psi}_{1}$ is the largest Schmidt coefficient of $\ket{\psi}$ \cite{vidal_prob}. We refer to Appendix \ref{app_fid_prob_two_qubit} for more details.

In the setup studied so far, we allowed for an error in the final state, aiming for the optimal conversion probability with a given error. We will now go one step further, and allow an error also in the initial state. In particular, we are interested in the maximal probability of conversion $\rho \rightarrow \sigma$ allowing for an error in the initial state, i.e., the true initial state $\rho'$ fulfills $F(\rho,\rho') \geq f_1$, and similarly $F(\sigma,\sigma') \geq f_2$ for the true final state $\sigma'$. The maximal conversion probability $\mathcal{P}_{f_{1},f_{2}}$ can be defined as $\mathcal{P}_{f_{1},f_{2}}(\rho\rightarrow\sigma)=\max\left\{ P_{f_{2}}(\rho'\rightarrow\sigma):F(\rho,\rho')\geq f_{1}\right\}$, where $P_{f_2}$ is the optimal probability defined in Eq.~(\ref{eq:ProbabilityPSC}). In the
following, we give the optimal probability $\mathcal{P}_{f_1,f_2}$ for the two-qubit setting if the initial state is pure:
\begin{eqnarray}\label{optprob2}
&&\mathcal{P}_{f_1,f_2}(\psi\rightarrow\rho) =\nonumber\\
&&\begin{cases}
1\,\,\,\mathrm{for}\,\,\,m_2 \geq 0,\\
\frac{\sin^{2}\left(\sin^{-1}\sqrt{G(\psi)} + \cos^{-1}\sqrt{f_1}\right)}{\sin^{2}\left(\sin^{-1}\sqrt{G(\rho)} - \cos^{-1}\sqrt{f_2}\right)}\,\,\,\mathrm{otherwise},
\end{cases}
\end{eqnarray}
where, $m_2=\sin^{-1}\sqrt{G(\psi)}-\sin^{-1}\sqrt{G(\rho)} + \cos^{-1}\sqrt{f_1} +\cos^{-1}\sqrt{f_2}$.
We refer to Appendix \ref{app_initial_state_fidelity_ball} for the
proof. This gives a complete solution for stochastic and approximate state conversion of the two-qubit systems when the initial state is pure. 

We will now demonstrate that the methods presented above can provide new insights about our understanding of two-qubit entanglement. In particular, the developed tools allow us to estimate the distance between a two-qubit state $\rho$ and the set of states with bounded entanglement. In particular, let us consider states $\sigma$ with the property that $G(\sigma) \leq g$ with some $g \geq 0$. We are interested in the minimal distance between a given state $\rho$ and all such states $\sigma$, where the distance $D(\rho,\sigma)$ is quantified via the Bures angle: 
\begin{equation}
    D(\rho, \sigma) = \cos^{-1}\left(\sqrt{F(\rho, \sigma)}\right).
\end{equation}
In the following, we show that the minimal distance is given by
\begin{eqnarray}\label{distance_geometric}
&&\min_{G(\sigma)\leq g}D(\rho,\sigma)=\nonumber\\
&&\begin{cases}
0 & \mathrm{if}\,\,\,G(\rho)\leq g,\\
\sin^{-1}\sqrt{G(\rho)}-\sin^{-1}\sqrt{g} & \mathrm{otherwise}.
\end{cases}
\end{eqnarray}
Furthermore, we also show that these results can be extended to other entanglement quantifiers, such as entanglement of formation and concurrence.

Here, we define a new quantity named as \emph{generalized geometric entanglement} 
\begin{equation}\label{generalised_geometric}
    G_g(\rho)=1-\max_{\sigma \in S_g}F(\rho, \sigma),
\end{equation}
where $S_g$ is the set of states with geometric entanglement at most $g$. Note that $G_g$ is nonnegative, and vanishes if and only if $G(\rho) \leq g$. Moreover, $G_g$ is an entanglement monotone, i.e., 
\begin{equation}
    G_g(\Lambda[\rho])\leq G_g(\rho) \label{eq:GkMonotonicity}
\end{equation}
for any LOCC operation $\Lambda$. For this, let $\sigma \in S_g$ be such that $G_g(\rho) = 1 - F(\rho,\sigma)$. Using the fact that fidelity does not decrease under quantum operations it follows: $F(\Lambda[\rho],\Lambda[\sigma]) \geq F(\rho,\sigma)$. Since geometric entanglement does not increase under LOCC, it further holds that $\Lambda[\sigma] \in S_g$, and thus $G_g(\Lambda[\rho]) \leq 1 - F(\Lambda[\rho],\Lambda[\sigma])$. Combining these results we arrive at Eq.~(\ref{eq:GkMonotonicity}) as claimed.

In the following, we will give a closed expression for the generalized geometric entanglement for all two-qubit states, which will also provide the proof for the result given in Eq.~(\ref{distance_geometric}).

For a probability of transformation $p\geq P(\psi\rightarrow\rho)$, the optimal achievable fidelity (with $\rho$) is given by maximising the fidelity of $\rho$ with the set of states having $G\leq \frac{G(\psi)}{p}$ (because all the states in this set, can be achieved from $\psi$ with a probability $p$). Therefore, from Eq. (21) of the main text, by using $\frac{G(\psi)}{p}=g$, we will get an analytical expression for $F_g(\rho)=1-G_g(\rho)$. Hence,
\begin{eqnarray}\label{eq:min_distance}
&&G_g(\rho)=\nonumber\\
&&\begin{cases}
0 & \mathrm{if}\,\,G(\rho)\leq g,\\
\sin^2\left(\sin^{-1}\sqrt{G(\rho)}-\sin^{-1}\sqrt{g}\right) & \mathrm{otherwise}.
\end{cases}\nonumber\\
\end{eqnarray}
Using the equation $G_g(\rho)=\min_{\sigma\in S_g}\sin^2 D(\rho,\sigma)$, one attains the result given in Eq. (\ref{distance_geometric}). Here, $S_g$ denotes the set of states with $G\leq g$. 

Recalling that the geometric entanglement $G$ has a closed expression for all two qubit states, the above result allows to evaluate $G_g$ for all two-qubit states. This result also gives significant insights into the geometry of two-qubit entanglement. In particular, it allows to evaluate the distance from a given state $\rho$ to any set of states $\sigma$ with constant entanglement of formation, if the distance is measured with the Bures distance or the Bures angle.

Let us now take an arbitrary entanglement measure $M$, such that $M(\rho)=f(G(\rho))$, where $f$ is a monotonically increasing function of $G(\rho)$. Then from Eq. (\ref{eq:min_distance}), we can write   
\begin{equation}
M_k(\rho)=
\begin{cases}
0 \,\, \mathrm{if}\,\,M(\rho)\leq k,\\
\begin{multlined}  \sin^2\Big(\sin^{-1}\sqrt{f^{-1}(M(\rho))}- \\ \sin^{-1}\sqrt{f^{-1}(k)}\Big)  \,\,\mathrm{otherwise}.
\end{multlined}
\end{cases}
\end{equation}

As examples, entangled of formation $(E_F)$ and concurrence $(C)$ have a close relation with geometric measure of entanglement as follows \cite{Wei_2003,wooters_entform,Alex_Linking}
\begin{eqnarray}
&&E_F(\rho)=h(G(\rho))\\
&&C(\rho)=\sqrt{1-[1-2G(\rho)]^2},
\end{eqnarray}
where $h(x)=-x \log_2 x-(1-x)\log_2(1-x)$ and $G(\rho)\in[0,1/2]$.

% \ref{fig:Bell}.
% \begin{figure}
% \includegraphics[width=1.0\columnwidth]{pr box.pdf}
% \caption{\label{fig:Bell}}
% \end{figure}
% In order to compute the optimal fidelity of $\mathcal{B}_{pr}$ with the local boxes, we first note that the (min-) fidelity ($F$) between any two boxes is the  fidelity between output probability distributions minimised over all inputs. This quantity is upper bounded by the average fidelity ($F^{a}$), which is defined as the average fidelity between output probability distributions over all inputs. It's easy to see that for any two boxes $B_1$ and $B_2$, $F^{a}(B_1,B_2)\geq F(B_1,B_2)$. This implies
% \begin{eqnarray}
% \max_{B_2\in B_{local}}F(B_{PR},B_2)\leq\max_{B_2\in B_{local}}F^{a}(B_{PR},B_2)
% \end{eqnarray}
% The optimisation $\max_{B_2\in B_{local}}F^{a}(B_{PR},B_2)$ can be easily performed on mathematica software, using the NMaximize function, which outputs a global maxima. By using this tool, we get
% \begin{equation}\label{av_g}
%     \max_{B_2\in B_{local}}F^{a}(B_{PR},B_2) = 0.75=\frac{3}{4}.
% \end{equation}
% From Eq. (\ref{r,g_bound}) and Eq. (\ref{av_g}), it follows that
\medskip

\section{Conclusions} We investigated the problem of converting quantum states within general quantum resource theories, and within the theory of entanglement. In particular, we considered probabilistic transformations, allowing for a small error in the final state. For general resource theories, we obtained upper bounds on the conversion probability and fidelity in Theorem~\ref{thm:GeneralBound}. These results significantly improve previously known bounds, and establish limits on the possible precision of probabilistic transformations in all quantum resource theories. As an application, we show that these bounds imply an upper bound on the asymptotic rates for various classes of states. This upper bound on rates turns out to be robust against sub-exponentially decaying (in number of copies of the state) probability of success and sub-exponential (in number of copies of the state) resource generation power of the transformation. We also show that the deterministic version of the single copy bounds can be applied for resource theories of quantum channels, which provide upper bounds on the conversion fidelity. In Theorem~\ref{thm:pureQuditConversion} we provide a complete solution for the stochastic-approximate transformations of arbitrary dimensional bipartite pure entangled states. In Theorem~\ref{thm:QubitConversion} we focused on two-qubit systems, and provided a complete solution to this problem if the initial state is pure. Furthermore, we apply the channel bound (Eq. (\ref{channelbound_main2})) to the resource theory of nonlocality, providing an upper bound for the optimal achievable fidelity between PR box and isotropic box. We show that this bound is tight and any locality preserving superchannel cannot increase the fidelity between the isotropic box and PR box. 

% A direct consequence of Theorem~\ref{thm:GeneralBound} is a lower bound on the resource robustness in terms of the geometric resource measure, as presented in Eq.~(\ref{r,g_bound}). When applied to the resource theory of entanglement, this result leads to an easily computable lower bound for entanglement robustness for two-qubit states, recalling that the geometric entanglement has a simple analytic expression for all states of two qubits. 

% Our results significantly enhance our understanding of resource quantifiers in general quantum resource theories. A direct consequence of Theorem~\ref{thm:GeneralBound} is a lower bound on the resource robustness in terms of the geometric resource measure, as presented in Eq.~(\ref{r,g_bound}). When applied to the resource theory of entanglement, this result leads to an easily computable lower bound for entanglement robustness for two-qubit states, recalling that the geometric entanglement has a simple analytic expression for all states of two qubits. 
% The methods presented in this Letter also shed new light onto the structure of two-qubit quantum states. In Theorem~\ref{thm:TwoQubitsContinuity} we provide tight continuity bounds for the geometric entanglement. These results allow to estimate the distance between any two-qubit state and the set of all states with constant entanglement, if the amount of entanglement is quantified by entanglement of formation, geometric entanglement, or concurrence. 

\section*{Acknowledgements} We thank Marek Miller, Manfredi Scalici, Roberto Rubboli and Ray Ganardi for discussion. We also thank Ryuji Takagi, Bartosz Regula, Benjamin Desef and Chong-long Liu for useful comments on our manuscript. This work was supported by the National Science Centre Poland (Grant No. 2022/46/E/ST2/00115). Tulja Varun Kondra also acknowledges IDUB micro-grant under Action IV.4.1 "A complex programme of support for UW PhD students" implemented in the program "Excellence Initiative - Research University".

\appendix

\section{Proof of Theorem \ref{thm:GeneralBound_resource_generating}}\label{app_theorem1}

Due to the definition of the resource robustness, there exists a state $\tau$, such that
\begin{equation}\label{robustness_decomposition}
    \frac{\rho+R(\rho)\tau}{1+R(\rho)} = \rho_f, 
\end{equation}
where $\rho_f$ is a free state. Note that, $R(\rho)<\infty$ for all $\rho$, if there exists a free state with full rank. Applying $\mathcal{E}$ on both sides of the Eq.~(\ref{robustness_decomposition}), we obtain
\begin{equation}
  \frac{1}{1+R(\rho)}\mathcal{E}( \rho)+\frac{R(\rho)}{1+R(\rho)}\mathcal{E}(\tau) = \mathcal{E}(\rho_f). \label{eq:generalBoundProof}
\end{equation}
Since we are interested in transformations with nonzero probability, we have $\tr{\mathcal{E}}(\rho) > 0$, which also implies $\tr{\mathcal{E}}(\rho_f) > 0$. Therefore,
\begin{equation}
  \frac{1}{1+R(\rho)}\frac{\mathcal{E}( \rho)}{\tr{\mathcal{E}}(\rho_f)}+\frac{R(\rho)}{1+R(\rho)}\frac{\mathcal{E}(\tau)}{\tr{\mathcal{E}}(\rho_f)} = \frac{\mathcal{E}(\rho_f)}{\tr{\mathcal{E}}(\rho_f)}.
\end{equation}
Using these results, we can evaluate the fidelity between $\FPT(\rho_f)/\tr[\FPT(\rho_f)]$ and $\sigma$:
\begin{align}
& F\left(\frac{\mathcal{E}(\rho_{f})}{\tr{\mathcal{E}}(\rho_{f})},\sigma\right)  =\nonumber\\& F\left(q\frac{\mathcal{E}(\rho)}{\tr{\mathcal{E}}(\rho)}+(1-q)\frac{\mathcal{E}(\tau)}{\tr{\mathcal{E}}(\tau)},\sigma\right), \label{eq:generalBoundProof2}
\end{align}
where we defined 
\begin{equation}
q=\frac{\tr{\mathcal{E}}(\rho)}{\tr{\mathcal{E}}(\rho_{f})[1+R(\rho)]}. \label{eq:generalBoundProof3}
\end{equation}
From Eq.~(\ref{eq:generalBoundProof}) we further see that 
\begin{equation}
1-q=\frac{\tr{\mathcal{E}}(\tau)R(\rho)}{\tr{\mathcal{E}}(\rho_{f})[1+R(\rho)]}.
\end{equation}
Using Eq.~(\ref{eq:generalBoundProof2}) and concavity of fidelity \cite{wilde_2017} we obtain 
\begin{equation}
F\left(\frac{\mathcal{E}(\rho_{f})}{\tr{\mathcal{E}}(\rho_{f})},\sigma\right)\geq qF\left(\frac{\mathcal{E}(\rho)}{\tr{\mathcal{E}}(\rho)},\sigma\right).
\end{equation}
Since,
\begin{equation}
    F^{M, \varepsilon}_{\max}(\sigma)=\max_{\tau} F(\sigma,\tau)\,\,\text{s.t}\,\,M(\tau)\leq \varepsilon
\end{equation}
and $\mathcal{E}$ is an $\varepsilon$-resource generating operation,
\begin{equation}
    q F\left(\frac{\mathcal{E}(\rho)}{\tr{\mathcal{E}(\rho)}},\sigma\right)\leq F^{M, \varepsilon}_{\max}(\sigma)
\end{equation}
% From the definition of stochastic free operations, we know, $\frac{\mathcal{E}(\rho_f)}{\tr{\mathcal{E}}(\rho_f)} \in \mathcal{F}$. Therefore, $F\left( \frac{\mathcal{E}(\rho_f)}{\tr{\mathcal{E}}(\rho_f)}, \sigma \right) \leq 1 - G(\sigma)$. This gives us
% \begin{equation}
% 1-G(\sigma)\geq qF\left(\frac{\mathcal{E}(\rho)}{\tr{\mathcal{E}}(\rho)},\sigma\right).
% \end{equation}
In the final step we use Eq.~(\ref{eq:generalBoundProof3}), and recall that $\tr \FPT(\rho) = p$, thus arriving at
\begin{align}\label{eq:generalbound4}
\left[1+R(\rho)\right]\times F^{M, \varepsilon}_{\max}(\sigma) & \geq\frac{\tr\FPT(\rho)}{\tr{\mathcal{E}(\rho_{f})}}F\left(\frac{\mathcal{E}(\rho)}{\tr(\mathcal{E}(\rho))},\sigma\right) \\
 & \geq pF\left(\frac{\mathcal{E}(\rho)}{\tr(\mathcal{E}(\rho))},\sigma\right), \nonumber 
\end{align}
where we also used the fact that $\tr \FPT (\rho_f) \leq 1$. This completes the proof of Theorem \ref{thm:GeneralBound_resource_generating} in the main text. Note that, throughout this proof we never used the convexity of the set of free states.

\section{Asymptotic behaviour of the bound}\label{app_asymptotic}
% Due to the definition of the resource robustness, there exists a state $\tau$, such that
% \begin{equation}\label{robustness_decomposition}
%     \frac{\rho+R(\rho)\tau}{1+R(\rho)} = \rho_f, 
% \end{equation}
% where $\rho_f$ is a free state. Applying $\mathcal{E}$ on both sides of the Eq.~(\ref{robustness_decomposition}), we obtain
% \begin{equation}
%   \frac{1}{1+R(\rho)}\mathcal{E}( \rho)+\frac{R(\rho)}{1+R(\rho)}\mathcal{E}(\tau) = \mathcal{E}(\rho_f). \label{eq:generalBoundProof}
% \end{equation}
% From concavity of fidelity, it follows that
% \begin{equation}
%     F(\mathcal{E}(\rho),\sigma)\leq (1+R(\rho))F(\rho_f,\sigma)
% \end{equation}
% Let us define,
% \begin{equation}
%     F^{\varepsilon}_{max}(\sigma)=\max_{\tau} F(\sigma,\tau)\,\,\text{s.t}\,\,R(\tau)\leq \varepsilon
% \end{equation}
% Since, $\mathcal{E}$ is $\varepsilon$-robustness generating operation,
% \begin{equation}
%      F(\mathcal{E}(\rho),\sigma)\leq (1+R(\rho)) F^{\varepsilon}_{max}(\sigma)
% \end{equation}
In this section, we give proof for Theorem \ref{thm:asymptotic} in the main text. Let us first show that $F^{M, \varepsilon}_{\max}(\sigma)\leq (1+\varepsilon)F_{\max}(\sigma)$, where $M$ is either standard ($S$) or generalised ($R$) robustness. This follows from the fact that every state $\tau$, having robustness at most $\varepsilon$, can be written in the following way 
\begin{equation}\label{epsilon}
    \frac{\tau+\varepsilon \tau'}{1+\varepsilon}=\tau_{\textrm{free}}
\end{equation}
Note that, when $\varepsilon$ is standard robustness ($S$), $\tau'$ has to be chosen to be a free state. Therefore, from concavity of fidelity, it follows that
\begin{equation}
    F(\sigma,\tau)\leq F(\sigma, \tau_{\textrm{free}})(1+\varepsilon)\leq F_{\max}(\sigma)(1+\varepsilon).
\end{equation}
Since this holds for every $\tau$ such that Eq. (\ref{epsilon}) is true,
\begin{equation}
   F^{M, \varepsilon}_{\max}(\sigma)\leq (1+\varepsilon)F_{\max}(\sigma).
\end{equation}
We now use the above inequality and our bound in Eq. (\ref{eq:Main3}), to give
\begin{eqnarray}
    && F^{M, \varepsilon}_{p}\left(\rho\rightarrow\sigma\right) \leq\nonumber\\
    &&\min\left\{\frac{1}{p}\left[1+R(\rho)\right]\times (1+\varepsilon)F_{\max}(\sigma),1\right\}.
\end{eqnarray}
% \begin{equation}
%  F\left(\frac{\mathcal{E}(\rho)}{\tr{\mathcal{E}(\rho)}},\sigma\right)\leq (1+R(\rho))F_{\max}(\sigma)\frac{(1+\varepsilon)}{p}
% \end{equation}
% Here, $p=\tr{\mathcal{E}(\rho)}$.
% We now use the following definitions to rewrite our bounds in an alternative form
% \begin{equation}
%     E_{\max}(\rho)=\min_{\rho_f}D_{\max}(\rho,\rho_f)
%     \end{equation}
% Let us recall that,
% \begin{equation}
%      E_{\max}(\rho)=\log_2(1+R(\rho))
% \end{equation}
%     and
% \begin{equation}
%     E_{1/2}(\sigma)=-\log_2 F_{\max}(\sigma).
% \end{equation}
Rewriting the above equation in alternative form gives us
\begin{equation}
    F^{M, \varepsilon}_{p}\left(\rho\rightarrow\sigma\right)\leq \min\left\{\frac{1+\varepsilon}{p}\left[ 2^{E_{\max}(\rho) -E_{1/2}(\sigma)} \right],1\right\}
\end{equation}
Let's now assume that $E_{1/2}(\sigma^{\otimes n})=nE_{1/2}(\sigma)$ (additive). Let us also note that $E_{\max}$ is sub-additive i.e.,
\begin{equation}
    E_{\max}(\rho^{\otimes n})\leq nE_{\max}(\rho).
\end{equation}
% We will now consider transformations involving multiple copies of both initial and target states. Let us consider $n$ copies of $\rho$ and $m$ copies of the target state $\sigma$. In such a case, we get
% \begin{equation}\label{multi-copy}
%     F^{M, \varepsilon}_{p}\left(\rho^{\otimes n}\rightarrow\sigma^{\otimes m}\right)\leq \min\left\{\frac{1+\varepsilon}{p}\left[ 2^{n E_{\max}(\rho) - m E_{1/2}(\sigma)} \right],1\right\}
% \end{equation}
% \begin{eqnarray}
%     F\left(\frac{\mathcal{E}_n(\rho^{\otimes n})}{\tr\mathcal{E}_n(\rho^{\otimes n})},\sigma^{\otimes m}\right) &\leq 2^{(nE_{\max}(\rho)-m E_{1/2}(\sigma))}\left(\frac{1+\varepsilon_n}{p_n}\right)
% \end{eqnarray}
% Here, $\mathcal{E}_n$ is a $\varepsilon_n$-resource generating operation (in standard or generalised robustness) and $p_n=\tr\mathcal{E}_n(\rho^{\otimes n})$ is the transformation probability. 
% Thanks to the sub-additivity of $E_{\max}$ and (assumed) additivity of $E_{1/2}$, one is required to evaluate these resource measures ($E_{\max}$ and $E_{1/2}$) only on a single copy of initial and final state respectively. This gives a bound on the achievable fidelity for multi-copy transformations.
We will now consider the case when we aim to achieve $\sigma$ with a rate $r$. In such a case, the achievable fidelity of transformation is bounded by
\begin{eqnarray}
    && F^{M,\varepsilon_n}_{p_n}\left(\rho^{\otimes n}\rightarrow\sigma^{\otimes \lfloor rn\rfloor}\right) \nonumber\\
    &&\leq \min\left\{\frac{1+\varepsilon_n}{p_n}\left[ 2^{n E_{\max}(\rho) -\lfloor rn\rfloor E_{1/2}(\sigma)} \right],1\right\}\nonumber\\
    &&\leq \min\left\{\frac{1+\varepsilon_n}{p_n}\left[ 2^{n E_{\max}(\rho) -(rn-1) E_{1/2}(\sigma)} \right],1\right\}\nonumber\\
   && = \min\left\{\frac{1+\varepsilon_n}{p_n}\left[ 2^{n E_{\max}(\rho) -rn E_{1/2}(\sigma)+E_{1/2}(\sigma) } \right],1\right\}.\nonumber\\
\end{eqnarray}
The second inequality follows from the fact that $\lfloor x\rfloor\geq x-1$. We now choose, $r=\frac{E_{\max}(\rho)}{E_{1/2}(\sigma)}+\delta$, where $\delta>0$.
\begin{eqnarray}\label{exponential}
    && F^{M,\varepsilon_n}_{p_n}\left(\rho^{\otimes n}\rightarrow\sigma^{\otimes \lfloor rn\rfloor}\right) \nonumber\\
    &&\leq \min\left\{\frac{1+\varepsilon_n}{p_n}\left[ 2^{(1-\delta \cdot n) E_{1/2}(\sigma)} \right],1\right\}.\nonumber
\end{eqnarray}
It is easy to see that $2^{(1-\delta\cdot n) E_{1/2}(\sigma)}$ goes to zero exponentially in $n$, for every $\delta>0$. This implies, whenever we allow for the generation of small amounts (sub-exponential in $n$) of resource (standard or generalised robustness) and the success probability does not decay exponentially in $n$, the the upper bound on fidelity can be made arbitrarily small by choosing big enough $n$.

% \begin{equation}\label{valid}
%     \varepsilon_n =k\cdot n +\delta'_{n},\,\,\text{where}\,\,k>0\,\,\text{and}\,\,\delta'_{n}\xrightarrow{}0\,\, \text{as}\,\,n\xrightarrow{}\infty.
% \end{equation}
% Therefore,
% \begin{equation}\label{rate}
% r^{M}_{prob}(\rho\xrightarrow{}\sigma)\leq \frac{E_{\infty}(\rho)}{E_{1/2}(\sigma)}
% \end{equation}

\section{Deterministic bound for channels}\label{app_channel_bound}
Let $\Lambda_{all}$ be the set of all quantum channels allowed in a given setting. Transformations acting on quantum channels are described by superchannels \cite{superchannels}, with the following property $S_{all}:=\{\theta:\Lambda_{all}\rightarrow\Lambda'_{all}\}$. Here, $\Lambda'_{all}$ are the set of allowed channels in the output space. Now we define the set of allowed channels (and superchannels) as ``free channel'' (and free superchannels). It is a necessity that free superchannels map the set of free channels to themselves. Since we are interested in fundamental bounds on the transformation of channels, we define the set of free superchannels with the above-mentioned property.  Fidelity between two quantum channels $\mathcal{E}_1$ and $\mathcal{E}_2$ is defined as
\begin{equation}\label{channel_fidelity}
    F(\mathcal{E}_1,\mathcal{E}_2)=\min_{\rho}F(\openone\otimes\mathcal{E}_1(\rho),\openone\otimes\mathcal{E}_{2}(\rho)).
\end{equation}
Note that, the fidelity between quantum channels can be computed via SDP \cite{Yuan_2017}. Now, we can go ahead and define the geometric measure of a channel as 
\begin{equation}
    G(\mathcal{E})= 1-\max_{\mathcal{E}_f} F(\mathcal{E},\mathcal{E}_f),
\end{equation}
where $\mathcal{E}_f$ corresponds to a free channel. The robustness of a quantum channel is defined as \cite{Discrimination_grt}
\begin{equation}
   R(\mathcal{E})=r\in \mathbb{R}_+| \inf_{r,\mathcal{M}} : \frac{\mathcal{E}+r\mathcal{M}}{1+r} \in \Lambda_f,\,\, \mathcal{M}\in \Lambda_{\textrm{all}}
\end{equation}
% \begin{equation}
%     F_{cond}(\mathcal{E}_1,\mathcal{E}_T)=\min_{\sigma}F(\frac{id\otimes\mathcal{E}_1(\sigma)}{p(\sigma)},id\otimes\mathcal{E}_{2}(\sigma))
% \end{equation}
% where $p(\rho)=\tr[id\otimes\mathcal{E}_1(\sigma)]$.
Due to the definition of the resource robustness, there exists a channel $\mathcal{M}$ and a free channel $\mathcal{E}_f$, such that
\begin{equation}\label{robustness_decomposition_channel}
    \frac{\mathcal{E}+R(\mathcal{E})\mathcal{M}}{1+R(\mathcal{E})} = \mathcal{E}_f. 
\end{equation}
Again note that, $R(\mathcal{E})<\infty$ for all $\mathcal{E}$, if there exists a free channel whose Choi matrix has full rank. Let $\Theta$ be a $\varepsilon$-resource ($M$) generating super-channel. Which is defined as follows
\begin{eqnarray}
   M( \Theta(\mathcal{E}))\leq \varepsilon.
\end{eqnarray}
Applying $\Theta$  on both sides of the Eq.~(\ref{robustness_decomposition_channel}), we obtain
\begin{equation}
  \frac{1}{1+R(\mathcal{E})}\Theta( \mathcal{E})+\frac{R(\mathcal{E})}{1+R(\mathcal{E})}\Theta(\mathcal{M}) = \Theta(\mathcal{E}_f). \label{eq:generalBoundProofChannel}
\end{equation}
Therefore, it follows that
\begin{eqnarray}
   && \min_{\rho}F\left(\openone\otimes\Theta(\mathcal{E}_{f})(\rho),\openone\otimes\mathcal{E}_t(\rho)\right) = \nonumber\\
    &&\min_{\rho}F\Big(q\openone\otimes\Theta(\mathcal{E})(\rho)+\nonumber\\
    &&\qquad\quad\,(1-q)\openone\otimes\Theta(\mathcal{M})(\rho),\openone\otimes\mathcal{E}_t(\rho)\Big), \label{eq:generalBoundProof2Channel}
\end{eqnarray}
where  
\begin{equation}
q=\frac{1}{[1+R(\mathcal{E})]}. \label{eq:generalBoundProof3Channel}
\end{equation}
From Eq.~(\ref{eq:generalBoundProofChannel}) we further see that 
\begin{equation}
1-q=\frac{R(\mathcal{E})}{[1+R(\mathcal{E})]}.
\end{equation}
Using Eq.~(\ref{eq:generalBoundProof2Channel}) and concavity of fidelity \cite{wilde_2017} we obtain 
\begin{align}\label{eq:generalBoundProof3Channel}
&\min_{\rho}F\left(\openone\otimes\Theta(\mathcal{E}_{f})(\rho),\openone\otimes\mathcal{E}_t(\rho)\right) \geq\\ \nonumber
&q\min_{\rho}F\left(\openone\otimes\Theta(\mathcal{E})(\rho),\openone\otimes\mathcal{E}_t(\rho)\right).
\end{align}
From the definition of $\varepsilon$-resource generating super-channel, it follows that,  $\min_{\rho}F\left(\openone\otimes\Theta(\mathcal{E}_{f})(\rho),\openone\otimes\mathcal{E}_t(\rho)\right) \leq F^{M, \varepsilon}_{\max}(\mathcal{E}_t)$. Here, $F^{M, \varepsilon}_{\max}(\mathcal{E})$ is defined as
\begin{eqnarray}
    && F^{M, \varepsilon}_{\max}(\mathcal{E})=\nonumber\\
    &&\max_{\mathcal{M}_1:M(\mathcal{M}_1)\leq\varepsilon}\min_{\rho}F\left(\openone\otimes\mathcal{M}_1(\rho),\openone\otimes\mathcal{E}(\rho)\right).
\end{eqnarray}
Using this in Eq. (\ref{eq:generalBoundProof3Channel}), we find
\begin{equation}
F^{M, \varepsilon}_{\max}(\mathcal{E}_t)\geq q\min_{\rho}F\left(\openone\otimes\Theta(\mathcal{E})(\rho),\openone\otimes\mathcal{E}_t(\rho)\right).
\end{equation}
Note that, from Eq. (\ref{channel_fidelity})
\begin{equation}
    \min_{\rho}F\left(\openone\otimes\Theta(\mathcal{E})(\rho),\openone\otimes\mathcal{E}_t(\rho)\right)=F\left(\Theta(\mathcal{E}),\mathcal{E}_t\right).
\end{equation}
Therefore,
\begin{align}\label{channelbound}
\min\{\left[1+R(\mathcal{E})\right]\times F^{M, \varepsilon}_{\max}(\mathcal{E}_t),1\}\geq F\left(\Theta(\mathcal{E}),\mathcal{E}_t\right),
\end{align}
holds for every $\varepsilon$-resource generating super-channel ($\Theta$). This completes the proof. When $\Theta_f$ is a resource non-generating super-channel ($\varepsilon=0$), we get
\begin{align}
\min\{\left[1+R(\mathcal{E})\right]\times F_{\max}(\mathcal{E}_t),1\}\geq F\left(\Theta_f(\mathcal{E}),\mathcal{E}_t\right).
\end{align}
This can be alternatively written as
\begin{eqnarray}
&& F\left(\Theta_f(\mathcal{E}),\mathcal{E}_t\right)
\nonumber\\
&&\leq \min\left\{ (1+R(\mathcal{E}))\times( 1-G(\mathcal{E}_t)) ,1\right\}.
\end{eqnarray}
Here, $G(\mathcal{E}_t)$ is the geometric measure of the channel $\mathcal{E}_t$. In the case when $\mathcal{E}=\mathcal{E}_t$, we get
\begin{equation}
    1\leq \min\left\{ (1+R(\mathcal{E}))\times( 1-G(\mathcal{E})) ,1\right\}.
\end{equation}
The above equation can be alternatively written as
\begin{equation}
    1+R(\mathcal{E})\geq\frac{1}{1-G(\mathcal{E})}.
\end{equation}
This is the generalisation of the bound given in Eq. (\ref{r,g_bound}), to the case of quantum channels.
\section{Proof of Eq. (\ref{eq:distill_singlet_locc})}\label{app_distill_locc}
It has been shown \cite{Verstraete1} that the optimal achievable fidelity to a singlet when starting from an arbitrary two-qubit state $\rho$ can be achieved by using a 1-LOCC (one way LOCC) protocol and is connected to the robustness of entanglement \cite{Vidal_robustness}. The optimal LOCC protocol can be given by the following procedure. With a probability $p_o$ Alice and Bob can achieve a two-qubit state which has fidelity $F_{p_o}$ with the maximally entangled state and in case of a failure they create a pure separable state $\ket{00}\!\bra{00}$. Therefore, 
\begin{eqnarray}
    \max_{\Lambda\in \textrm{LOCC}}F(\Lambda(\rho),\phi_{+})&&= p_oF_{p_o}+\frac{1-p_{o}}{2}\nonumber\\
    &&=\frac{1+R(\rho)}{2},\label{LOCC bound}
\end{eqnarray}
where $R(\rho)$ represents the robustness of entanglement for the state $\rho$ \cite{Vidal_robustness}. This is equivalent to 
\begin{equation}
    p_o\left(F_{p_o}-\frac{1}{2}\right) +\frac{1}{2}=\frac{1+R(\rho)}{2}
    \end{equation}
and therefore
\begin{equation}
F_{p_o}=\frac{R(\rho)}{2p_o}+\frac{1}{2}.
\end{equation}
Therefore, if an SLOCC protocol has a success probability at least ``$p$'', then the optimal achievable singlet fraction satisfies the following inequality
\begin{equation}\label{bound_singlet}
    F_{p}\left(\rho\rightarrow\phi_{+}\right)\leq\min\left\{\frac{R(\rho)}{2p}+\frac{1}{2},1\right\},
\end{equation}
because,  if there exists an SLOCC protocol violating this bound then we can complete the SLOCC protocol by preparing a separable state $\ket{00}\!\bra{00}$ in case of a failure. This LOCC protocol would violate the bound given in Eq. (\ref{LOCC bound}).

\section{Proof of Theorem \ref{thm:pureQuditConversion}}\label{app_proof_pure_bipartite}

We now provide analytical expressions for $F_p$ and $P_f$ when the initial ($\psi$) and target ($\phi$) states are pure states of arbitrary finite dimension.

Let $S_p$ be the set of states which can be achieved from $\psi$ with at least a probability $p$. Therefore
\begin{equation}
    F_p(\psi\rightarrow\phi)=\max_{\rho\in S_p}\bra{\phi}\rho\ket{\phi}.
\end{equation}
We now show that the state in $S_p$ with optimum fidelity (to $\phi$) can always be chosen to be a pure state. Firstly, we define 
\begin{equation}
    E^{\psi}_l=\sum_{i=l}^{n}\alpha_i\,\,\textrm{and}\,\, E^{\phi}_l=\sum_{i=l}^{n}\beta_i,
\end{equation}
where $\{\alpha_i\}$, $\{\beta_i\}$ are the squared Schmidt coefficients of $\psi$ and $\phi$ in decreasing order. 
%Without loss of generality, we can assume that, $i:1\rightarrow n$. $E_l$'s are entanglement monotones, which do not increase on average under LOCC \cite{Vidaldoi:10.1080/09500340008244048}. Let, $\sigma$ be a mixed state, achieved from $\psi$ via LOCC with a probability $p$ and has a fidelity $f$ with $\phi$.
% \begin{equation}
%     \bra{\phi}\sigma\ket{\phi}=f.
% \end{equation}
% Let, $\{p_k,\mu_k\}$ be a pure state decomposition of $\sigma$. Since, $E_{l}$'s do not increase, on average,
% under any local transformation\cite{monotones}. \alex{I do not understand the previous sentence} This property is also called as so-called "strong monotonicity". These measures can be extended to mixed states, preserving the strong monotonicity via a convex-roof construction, given as
% \begin{equation}
%     E^{\rho}_{l}=\min \sum_{k} p'_kE^{\mu'_{k}}_{l}=\sum_{k} p_kE^{\mu_k}_{l}
% \end{equation}
% where the minnimisation is performed over all the pure state decompositions of $\rho$. Without loss of generality, we can assume  $\{p_k,\mu_k\}$ is the decomposition which achieves the minimum. For probabilistic transformations, the ratio of any geometric measure (of initial to target states) serves as a upper bound to the achievable probability (p) \cite{wu_coherence}. 
In \cite{Vidal_approx}, it was showed that an ensemble of quantum states $\{q_j,\rho_j\}$ can be generated from $\psi$ via LOCC if and only if there exists a pure state ensemble $\{q_kr_{k,j}, \psi_{k,j}\}$, satisfying
\begin{equation}
    E_l^{\psi}\geq \sum_{k,j}q_kr_{k,j}E^{\psi_{k,j}}_{l} \,\,\textrm{for all}\,\,l, \label{dec}
\end{equation}
where $\rho_k=\sum_j r_{k,j}\psi_{k,j}$.
This implies, to generate $\rho$ with a probability $p$, the following inequality must be satisfied for at least one pure state decomposition $\{p_k,\mu_k\}$ of the state $\rho$:
\begin{equation}
    E_l^{\psi}\geq p\sum_{k} p_kE^{\mu_k}_{l} \,\,\textrm{for all}\,\,l. \label{dec}
\end{equation}

Let the squared Schmidt coefficients of $\mu_k$ be $\{\gamma_{i}^{k}\}$, ordered in decreasing order. In the next step, we will prove the following inequality:
\begin{eqnarray}\label{upperbound}
   f = \bra{\phi}\rho\ket{\phi} &&\leq \sum_kp_k\left(\sum_{i=1}^{n}\sqrt{\gamma_{i}^{k}\beta_i}\right)^{2}\nonumber\\
   &&\leq \left(\sum_{i=1}^{n}\sqrt{\sum_kp_k\gamma_{i}^{k}\beta_i} \right)^{2}.
\end{eqnarray}
The first inequality follows from the fact that the fidelity of each pure state (in the decomposition) to $\phi$, is maximal when they have all same Schmidt basis as $\phi$ \cite{Vidal_approx}. In order to see the second inequality, let us define diagonal matrices $\tau_\phi$, $\tau_{\mu_{k}}$ and $\tau_\chi$, with diagonal elements as the squares of ordered Schmidt coefficients of $\phi$, $\mu_{k}$ and $\chi$ respectively. $\tau_\chi$ is defined as 
\begin{equation}
     \tau_\chi=\sum_{k}{p_k}\tau_{\mu_{k}}.
\end{equation}
 The second inequality then follows from concavity of fidelity: 
 \begin{equation}
     \sum_{k}{p_k}F(\tau_{\mu_{k}},\tau_\phi)\leq F(\tau_\chi,\tau_\phi).
 \end{equation}

Now we define a pure state, the Schmidt coefficients of $\chi$ are square-roots of the diagonal elements of $\tau_{\chi}$
\begin{equation}
 \ket{\chi}= \sum_{i=1}^{n}\sqrt{\sum_k p_k\gamma^{k}_{i}}\ket{i_Ai_B}\label{state}
\end{equation}
where, $\ket{i_Ai_B}$ are the same Schmidt basis as $\phi$. Here, note that
\begin{eqnarray}
    E_l^{\chi}=\sum_{i=l}^n\left(\sum_k p_k\gamma_i^k\right)
    &&=\sum_k p_k \sum_{i=l}^n\gamma_i^k\nonumber\\
    &&=
    \sum_{k} p_kE^{\mu_k}_{l}.\label{sum}
\end{eqnarray}
 It is easy to see that, 
\begin{equation}
 | \langle \chi| \phi \rangle|^{2}  =  \left(\sum_{i=1}^{n}\sqrt{\sum_kp_k\gamma_{i}^{k}\beta_i} \right)  ^{2}
\end{equation}
coinciding with the upper bound in Eq. (\ref{upperbound}), on $f$. Since, the Schmidt coefficients of $\chi$ are square-roots of the diagonal elements of $\tau_{\chi}$, From Eq. (\ref{dec}) and Eq. (\ref{sum}), it follows that
 \begin{equation}
    E_l^{\psi}\geq p\sum_{i} p_iE^{\mu_i}_{l} = p E_l^{\chi} \,\,\textrm{for all}\,\,l.
\end{equation}
Therefore, from \cite{vidal_purestates}, $\chi$ can be achieved via SLOCC from $\psi$ as the initial state, with at least a probability $p$. This shows that  the state in $S_p$ with optimum fidelity (to $\phi$) can always be chosen to be a pure state. In the following, we construct a pure state in $S_p$, which has optimal fidelity with $\phi$. 

For the probability distribution $\vec{\beta}$ (decreasing order and dimension $n$), from \cite{steepest}, we know that, there exits a $\vec{\beta}^{\delta}$ satisfying $\norm{\vec{\beta}-\vec{\beta}^{\delta}}_1\leq \delta$ (where $\norm{\vec{\gamma}}$ is defined as $\norm{\vec{\gamma}}=\sum_i|\gamma_i|$ ) and majorize every other state in the $\delta$-ball (steepest $\delta$-approximation of $\vec{\beta}$). The authors also provide a method to construct such a distribution, which goes as follows. If $(1,0, \ldots, 0)$ lies inside the $\delta$-ball, $\vec{\beta}^{\delta}=(1,0, \ldots, 0)$. Otherwise, we add $\frac{\delta}{2}$ to $\beta_1$ and choose an integer $l^{*} \in\{1, \ldots, n\}$ such that
\begin{equation}
\beta_1+\frac{\delta}{2} + \sum_{i=2}^{l^{*}} \beta_{i} \leqslant 1 \quad \text {and} \quad \beta_1+\frac{\delta}{2} +\sum_{i=2}^{l^{*}+1} \beta_{i}>1.
\end{equation}
The steepest $\delta$-approximation of $\vec{\beta}$ is
\begin{equation}\label{construction}
\vec{\beta}^{\delta}= \begin{cases}\beta_1+\frac{\delta}{2} & \text { for } \quad i=1\\
\beta_{i} & \text { for } \quad 1<i<l^{*}+1 \\ 1-x & \text { for } \quad i=l^{*}+1 \\ 0 & \text { for } \quad i>l^{*}+1\end{cases}
\end{equation}
with $x=\beta_1+\frac{\delta}{2}+\sum_{i=2}^{l^{*}}\beta_i$. In this way we can construct a pure state $\phi_{st}$ from $\phi$, given by
\begin{equation}
   \ket{\phi_{st}} = \sum_i^{n}\sqrt{\beta_{i}^{\delta}}\ket{i_Ai_B},
\end{equation}
where, $\delta=\sqrt{1-f_1}$. By construction, the fidelity of $\phi_{st}$ with $\phi$ is $f_1$ and $\phi_{st}$ can be achieved deterministically from every pure state which has a fidelity at least $f_1$ (with $\phi$). Now we need to find the optimal $f_1$ such that $\phi_{st}$ can be achieved from $\psi$ with at least $p$ probability. This implies that,  $F_p(\psi\rightarrow\phi)$ is given by the maximum value of $f_1$, such that $\phi_{st}$ can be achieved with at least probability $p$, from $\psi$. Equivalently, $F_p(\psi\rightarrow\phi)$ is the maximum possible value of $f_1$ satisfying \cite{vidal_purestates},
\begin{eqnarray}\nonumber
p\times\max&&\left\{\sum_{i=l}^{n}\beta_i- \frac{\sqrt{1-f_1}}{2},0\right\}\nonumber\\
&&\leq \sum_{i=l}^{n}\alpha_i\,\,\textrm{for}\,\, l:2,\cdots, n.
    % &&\sum_{i=1}^{l-1}\beta_i+ \frac{\sqrt{1-f_1}}{2}\geq \frac{\sum_{i=1}^{l-1}\alpha_i-(1-p)}{p}\,\, \forall l\iff\\
    % && f_1 \leq \left(1-4\left(\frac{\sum_{i=1}^{l-1}\alpha_i-(1-p)}{p}- \sum_{i=1}^{l-1}\beta_i\right)^2\right)\,\,\forall l.
\end{eqnarray}
Since, $\alpha_i$'s are all non-negative, we can drop the ``$\max$'' from the above inequality and we have 
\begin{eqnarray}
&&p(E_{l}^{\phi}- \frac{\sqrt{1-f_1}}{2})\leq E_{l}^{\psi}\, \textrm{for}\, l:2,\cdots,n.
\end{eqnarray}
Solving this gives
\begin{equation}
    f_1\leq1-4\left( E_{l}^{\phi}-\frac{ E_{l}^{\psi}}{p}\right)^2\,\,\textrm{for}\,\, l:2,\cdots,n.
    % \max_l\left(1-4\left(\frac{\sum_{i=1}^{l-1}\alpha_i-(1-p)}{p}- \sum_{i=1}^{l-1}\beta_i\right)^2\right).
\end{equation}
Therefore,
\begin{eqnarray}
    &&F_p(\psi\rightarrow\phi)\\
    &&=\min_{l}\left\{1-4\left( E_{l}^{\phi}-\frac{ E_{l}^{\psi}}{p}\right)^2\right\}\,\,\textrm{for}\,\, l:2,\cdots,n.\nonumber
    % \max_l\left(1-4\left(\frac{\sum_{i=1}^{l-1}\alpha_i-(1-p)}{p}- \sum_{i=1}^{l-1}\beta_i\right)^2\right).
\end{eqnarray}

Now using the results in \cite{vidal_purestates}, we provide a closed expression for $P_f(\psi\rightarrow\phi)$. In this case, the optimal transformation can be achieved by transforming $\ket{\psi}$ into a pure state given by $\ket{\phi^{\sqrt{1-f}}_{st}}$. Therefore
\begin{eqnarray}
    &&P_f(\psi\rightarrow\phi)\\
    &&=\min\left\{1,\min_l \left\{\frac{E_l^{\psi}}{\max\left\{E_l^{\phi}- \frac{\sqrt{1-f}}{2},0\right\}}\right\} \right\},\nonumber
\end{eqnarray}
where $l$ ranges from 2 to $n$.

% Before we present these results, we will introduce the necessary theoretical tools to tackle this question. In the following Theorem~\ref{thm:TwoQubitsContinuity}, we give tight continuity bounds on the geometric entanglement in the two-qubit settings. These bounds provide significant insight into properties of two-qubit entanglement, and will be used in Theorem~\ref{thm:QubitConversion} to obtain analytic expressions for $F_p$ and $P_f$. We are now ready to present the following theorem.
% \begin{theorem} \label{thm:TwoQubitsContinuity}
% For a two-qubit state $\rho$ consider a set of states $S_{\rho,f}$ such that $F(\rho,\rho')\geq f$ for all $\rho' \in S_{\rho,f}$. The minimal geometric entanglement in $S_{\rho,f}$ is given by 
% \begin{align}
% \min_{\rho'\in S_{\rho,f}}G(\rho') & =\sin^{2}\left(\max\left\{ \sin^{-1}\!\sqrt{G(\rho)}-\cos^{-1}\!\sqrt{f},0\right\} \right).
% \end{align}
% For pure two-qubit states $\ket{\psi}$, the maximal geometric entanglement in $S_{\psi,f}$ is given by 
% \begin{align}
%     \max_{\rho'\in S_{\psi,f}}G(\rho') & =\sin^{2}\left(\min\left\{ \sin^{-1}\!\sqrt{G(\psi)}+\cos^{-1}\!\sqrt{f},\frac{\pi}{4}\right\} \right).
% \end{align}
% \end{theorem}
% \noindent The proof of the above theorem can be found in the Supplemental Material.

\section{Proof for Lemma \ref{thm:TwoQubitsContinuity}}\label{app_tight_continuinty}

In this section, we give tight continuity bounds on the geometric entanglement in the two-qubit settings. These bounds give new insight into the properties of two-qubit entanglement and will be used to prove Theorem~\ref{thm:QubitConversion} of the main text. We will make use of the following distance measure:
\begin{equation}
    D(\rho, \sigma) = \cos^{-1}\left(\sqrt{F(\rho, \sigma)}\right), \label{eq:BuresAngle}
\end{equation}
which is also known as the Bures angle. $D(\rho,\sigma)$ is a proper distance measure for the set of quantum states, as it is non-negative and zero if and only if $\rho = \sigma$, and it satisfies the triangle inequality. Note that for any state of two qubits the geometric entanglement $G$ is between $0$ and $1/2$. For any states $\rho$ and $\rho' \in S_{\rho,f}$ it is thus possible to define $\{\alpha, \beta\}\in [0,\pi/4]$ and $k \in [0,\pi/2]$ such that
\begin{subequations} 
\begin{align}
G(\rho) & =\sin^{2}\alpha,\label{eq:assumptions1} \\
G(\rho') & =\sin^{2}\beta,\label{eq:assumptions2} \\
f & = \cos^{2}k.\label{eq:assumptions3}
\end{align}
\end{subequations}
Firstly, we provide bounds on the minimum and maximum achievable geometric measure in the set $S_{\rho,f}$. Let us assume $\rho_s$ is the closest separable state to $\rho$ with respect to the Bures angle. Similarly, $\rho'_s$ is the closest separable state to $\rho'$. From Eq.~(\ref{eq:assumptions1}) it follows that
\begin{eqnarray}
D(\rho,\rho_{s})&&=\cos^{-1}\left(\sqrt{F(\rho,\rho_{s})}\right)\nonumber\\
&&=\cos^{-1}\left(\sqrt{1-G(\rho)}\right)=\alpha.
\end{eqnarray}
Similarly, we obtain
\begin{align}
D(\rho',\rho'_{s}) & =\beta, \\
D(\rho,\rho') & \leq k,
\end{align}
where in the last inequality we used the fact that $\rho' \in S_{\rho,f}$. Now, using the triangle inequality and the fact that $\rho_s$ is the closest separable state to $\rho$ we obtain
\begin{eqnarray}
\alpha = D(\rho,\rho_s) \leq D(\rho,\rho'_s) &&\leq D(\rho,\rho')+D(\rho', \rho'_s) \nonumber\\
&&\leq k + \beta.
\end{eqnarray}
Therefore, we find
\begin{align}
    \beta \geq \max\{\alpha - k, 0\}. \label{lower_bound}
\end{align}
Eqs.~(\ref{eq:assumptions2}) and  (\ref{lower_bound}) imply a lower bound on $G(\rho')$:
\begin{equation}
G(\rho') \geq \sin^2(\max\{\alpha-k,0\}). \label{geometric_lower_bound}
\end{equation}
Similarly, we can also find an upper bound on $G(\rho')$. In particular, it holds:
\begin{align}
\beta = D(\rho',\rho'_s) \leq D(\rho',\rho_s) &\leq D(\rho,\rho')+D(\rho, \rho_s)\\
&\leq k + \alpha. \nonumber
\end{align}
Therefore, we have
\begin{align}
    \beta \leq \min \{ \alpha + k, \pi/4\}. \label{upper_bound}
\end{align}
From Eqs.~(\ref{eq:assumptions2}) and (\ref{upper_bound}), the upper bound on $G(\rho')$ is given by
\begin{equation}
G(\rho') \leq \sin^2(\min \{ \alpha + k, \pi/4\}). \label{geometric_upper_bound}
\end{equation}

In the following, we show that the lower bound in Eq. (\ref{geometric_lower_bound}) can be achieved. Note that for every two-qubit state $\rho$ we can find a pure state decomposition such that $\rho = \sum_i p_i \ket{\psi_i}\!\bra{\psi_i}$ and $G(\psi_i) = G(\rho)$ holds true for all states $\ket{\psi_i}$ \cite{wooters_entform,vidal_prob,Wei_2003}. This implies that each of the states $\ket{\psi_i}$ can be written as
\begin{eqnarray}
    \ket{\psi_i} &= \cos{\alpha}\ket{a_i}\ket{b_i} + \sin{\alpha}\ket{a^{\perp}_i}\ket{b^{\perp}_i},
\end{eqnarray}
where $\braket{a_i}{a^{\perp}_i} = \braket{b_i}{b^{\perp}_i} = 0$. Now we choose a state $\rho_{\min}$ such that
\begin{align}\label{minimum_decomposition}
    \rho_{\min} = \sum_i q_i \ket{\phi_i}\!\bra{\phi_i}
\end{align}
with the pure states
\begin{eqnarray}
    \ket{\phi_i} &= \cos{\tilde{\beta}}\ket{a_i}\ket{b_i}  + \sin{\tilde{\beta}}\ket{a^{\perp}_i}\ket{b^{\perp}_i}
\end{eqnarray}
with $\tilde \beta \in [0,\pi/4]$ and probabilities
\begin{equation}
    q_i = \frac{p_i |\braket{\psi_i|\phi_i}|^2}{\sum_k p_k |\braket{\psi_k|\phi_k}|^2}. \label{prob}
\end{equation}
Note that, from the convexity of geometric measure, it follows
\begin{equation}
    G(\rho_{\min})\leq \sin^2\tilde{\beta}. \label{convexity}
\end{equation}
Using the properties of fidelity \cite{wilde_2017}, we obtain 
\begin{align}
   \sqrt{F(\rho, \rho')} &\geq \sum_i \sqrt{p_i q_i} |\braket{\psi_i|\phi_i}| \nonumber \\
    &= \sqrt{\sum_i p_i |\braket{\psi_i|\phi_i}|^2} = |\cos (\alpha -\tilde{\beta})|.
    \end{align}
    Therefore,
    \begin{equation}
        F(\rho, \rho_{\min}) \geq \cos^{2}(\alpha -\tilde{\beta}). \label{fidelity}
    \end{equation}

In the following, we set
\begin{equation}
\tilde{\beta}=\max\{\alpha-k,0\}. \label{eq:btilde}
\end{equation}
For this choice Eq. (\ref{fidelity}) becomes
     \begin{equation}\label{ineq1}
        F(\rho, \rho_{\min}) \geq \cos^{2}\left(\min\{k,\alpha\}\right) \geq \cos^{2}k. \end{equation}
The above inequality holds as $\cos^2$ is a monotonically decreasing function in $[0, \pi/4]$ and $\min\{k,\alpha\}$ is in $[0, \pi/4]$. Therefore, the state $\rho_{\min}$ is within $S_{\rho,f}$. Moreover, from Eq.~(\ref{convexity}) we obtain
\begin{equation}
G(\rho_{\min})\leq\sin^{2}\left(\max\{\alpha-k,0\}\right). \label{eq:btildeUpperBound}
\end{equation}

In the next step, recall that for every state in $S_{\rho,f}$ the geometric entanglement is bounded from below by Eq. (\ref{geometric_lower_bound}). This also applies to the state $\rho_{\min}$. Since the geometric entanglement of $\rho_{\min}$ is bounded from above by Eq. (\ref{eq:btildeUpperBound}), we obtain 
\begin{equation}
 G(\rho_{\min}) =  \sin^2(\max\{\alpha-k,0\}).
 \end{equation}
This shows that the lower bound in Eq.~(\ref{geometric_lower_bound}) is achievable. Hence, the minimum geometric measure within the set $S_{\rho,f}$ is given by 
\begin{eqnarray}\label{min_geo}
&&\min_{\rho'\in S_{\rho,f}}G(\rho') = \sin^2\left(\max\{\alpha-k,0\}\right)\\
&&=\sin^{2}\left(\max\left\{ \sin^{-1}\!\sqrt{G(\rho)}-\cos^{-1}\!\sqrt{f},0\right\} \right). \nonumber
\end{eqnarray}

We will now discuss explicitly the case where the state $\rho$ is pure, i.e.,
\begin{eqnarray}
\ket{\psi}=\cos{\alpha}\ket{a}\ket{b} + \sin{\alpha}\ket{a^{\perp}}\ket{b^{\perp}}.
\end{eqnarray}
In this case we will show that the upper bound given in Eq.~(\ref{geometric_upper_bound}) is achievable as well. For this, we choose
\begin{align}
    \ket{\psi_{\max}} &= \cos(\min \{ \alpha + k, \pi/4\})\ket{a}\ket{b} \label{pure_max}\\
    &+ \sin(\min \{ \alpha + k, \pi/4\})\ket{a^{\perp}}\ket{b^{\perp}}. \nonumber
\end{align}
Note that 
\begin{align} 
G(\psi_{\max}) &= \sin^2(\min \{ \alpha + k, \pi/4\}). 
\end{align}
Furthermore, we note that
\begin{equation}
F(\psi, \psi_{\max}) = \cos^{2}\left(\min \left\{k,\frac{\pi}{4}-\alpha\right\} \right) \geq \cos^{2}k. 
\end{equation}
The above inequality holds as $\cos^2$ is a monotonically decreasing function in $[0, \pi/4]$ and $\min \{k,\pi/4-\alpha\}$ are in $[0, \pi/4]$. Hence, $\ket{\psi_{\max}}$ is inside $S_{\psi,f}$ and has the maximum possible geometric entanglement as following
\begin{eqnarray}
&&\max_{\rho'\in S_{\psi,f}}G(\rho') = \sin^2\left(\min\{\alpha+k,\pi/4\}\right)\\
&&=\sin^{2}\left(\min\left\{ \sin^{-1}\!\sqrt{G(\psi)}+\cos^{-1}\!\sqrt{f},\pi/4\right\} \right)\nonumber.
\end{eqnarray}
This completes the proof.

\section{Proof of Theorem \ref{thm:QubitConversion} }\label{app_fid_prob_two_qubit}

In \cite{vidal_prob}, it has been shown that for an initial two-qubit pure state $\ket{\psi}$, the optimal probability to achieve a two-qubit state $\rho$ with unit fidelity via LOCC is given by
\begin{equation}
    P(\psi \rightarrow \rho) =\min \{\frac{G(\psi)}{G(\rho)},1\}. \label{opt_prob}
\end{equation}

Now, if we want to go to a final state $\rho$ with some fidelity $f$, it is clear that to achieve maximal probability, we need to stochastically transform $\ket{\psi}$ into a state $\rho'$ with the minimum geometric measure of entanglement such that $F(\rho,\rho') \geq f$, and thus
\begin{equation}
P_{f}(\psi\rightarrow\rho)=\min\{\frac{G(\psi)}{\min_{\rho'\in S_{\rho,f}}G(\rho')},1\}.
\end{equation}

Recall the tight continuity bounds for geometric measure, it holds:
\begin{eqnarray}
&&\min_{\rho'\in S_{\rho,f}}G(\rho')  \\
&&=\sin^{2}\left(\max\left\{ \sin^{-1}\!\sqrt{G(\rho)}-\cos^{-1}\!\sqrt{f},0\right\} \right).\nonumber
\end{eqnarray}
Using the quantity
\begin{equation}
    m_1=\sin^{-1}\sqrt{G(\psi)}-\sin^{-1}\sqrt{G(\rho)} +\cos^{-1}\sqrt{f},
\end{equation}
let us now consider the case when $m_1\geq0$, which translates to 
\begin{equation}
    \sin^{-1}\sqrt{G(\rho)}-\cos^{-1}\sqrt{f} \leq \sin^{-1}\sqrt{G(\psi)}.
\end{equation}
It holds that 
\begin{equation}
\sin^{-1}\sqrt{G(\rho)}-\cos^{-1}\sqrt{f}\in[-\pi/2,\pi/4]
\end{equation}
and $\sin^{-1}\sqrt{G(\psi)}\in[0,\pi/4]$. Therefore, we have 
\begin{eqnarray}
    &&\max\left\{\sin^{-1}\sqrt{G(\rho)}-\cos^{-1}\sqrt{f},0\right\}\nonumber\\
    &&\leq \sin^{-1}\sqrt{G(\psi)}.
\end{eqnarray}
From these results, we obtain
\begin{eqnarray}
&&\min_{\rho'\in S_{\rho,f}}G(\rho')\nonumber\\ && =\sin^{2}\left(\max\left\{ \sin^{-1}\sqrt{G(\rho)}-\cos^{-1}\sqrt{f},0\right\} \right)\nonumber\\
 && \leq\sin^{2}(\sin^{-1}\sqrt{G(\psi)})=G(\psi). 
\end{eqnarray}
For $G(\psi) > 0$ if follows that 
\begin{equation}
    \frac{G(\psi)}{\min_{\rho'\in S_{\rho,f}}G(\rho')} \geq 1.
\end{equation}
This implies that $P_{f}(\psi\rightarrow\rho)=1$ if $m_1\geq 0 $. 

Next, we consider the case when $m_1<0$, which can be expressed as,
\begin{align}
   \sin^{-1}\sqrt{G(\rho)}-\cos^{-1}\sqrt{f}>\sin^{-1}\sqrt{G(\psi)}>0.
\end{align}
Therefore, in this case we have
\begin{equation}\label{eq:optimal_probability_entanglement}
    P_{f}(\psi\rightarrow\rho)=\frac{G(\psi)}{\sin^{2}(\sin^{-1}\sqrt{G(\rho)}-\cos^{-1}\sqrt{f}) }.
\end{equation}
This completes the proof of Theorem 3.

From the above result we can also obtain a closed expression for $F_p$. For the case, when $p \leq \frac{G(\psi)}{G(\rho)}<1$, (here, we assume $G(\psi)<G(\rho)$) from Eq. (\ref{opt_prob}), one can see that $F_p (\psi \rightarrow \rho) = 1$. Note that, if $G(\psi)\geq G(\rho)$, then the transformation is always possible exactly with unit probability \cite{vidal_prob}. When $1\geq p>\frac{G(\psi)}{G(\rho)}$, the optimal achievable fidelity can be obtained by solving Eq.~(\ref{eq:optimal_probability_entanglement}) for $f$, which gives 
\begin{eqnarray}
&&F_p (\psi \rightarrow \rho)\nonumber\\
&&=\cos^{2}\left[\sin^{-1}\!\sqrt{G(\rho)}-\sin^{-1}\!\sqrt{\frac{G(\psi)}{p}}\right].
\end{eqnarray}
Combining these results we arrive at Eq.~(16) of the main text.
\subsection{Examples to compare the exact result and the upper bound}
In the following we provide concrete examples, also comparing the optimal conversion fidelity $F_p$ with the upper bound on fidelity for general resource theories, see Theorem 1 of the main text. In all our examples the initial state will be pure: $\ket{\psi}=\cos\alpha \ket{00}+\sin \alpha \ket{11}$. In the first example (see Fig. \ref{fig:fidelity_werner}) the final target state is a two-qubit Werner state~\cite{Werner_1989}
\begin{equation}
\rho_{W}=r\ket{\phi^{+}}\!\bra{\phi^{+}}+(1-r)\frac{\openone}{4}
\end{equation}
with $\ket{\phi^+} = (\ket{00}+\ket{11})/\sqrt 2$. The geometric measure of entanglement of the Werner state is $G(\rho_W)=\max\{ \frac{1}{2} - \frac{1}{4} \sqrt{3 + 6 r - 9 r^2},0\}$ \cite{vidal_prob,Wei_2003,wooters_entform}.
Note that the state is separable for $r\leq 1/3$. It is possible to achieve exact conversion $\psi \rightarrow \rho_W$ with optimal probability $\min\{G(\psi)/G(\rho_W),1\}$ \cite{vidal_prob}. A larger probability can be achieved with stochastic approximate conversion, and the optimal fidelity can be found using Eq. (21) of the main text. In Fig. \ref{fig:fidelity_werner} (a) the optimal fidelity $F_p(\psi \rightarrow \rho_W)$ is shown as function of the probability $p$ for $\alpha=0.01$ and $r=0.9$. In the small-probability regime, it is possible to achieve a significant increase of the transformation fidelity by only a small reduction of the conversion probability. A similar behavior is found for fidelity close to $0.7718$. In this regime, a small reduction of the transformation fidelity can lead to a significant increase of the transformation probability. In the inset of Fig.~\ref{fig:fidelity_werner} (a) we also show the upper bound on $F_p$ given in Theorem 1. The gap between the upper bound and the true value of $F_p$ is nonzero for all $p$, and for $p=1$ the gap reaches its minimal value $0.0068$.

In Fig. \ref{fig:fidelity_werner} (b) we show the optimal fidelity $F_p(\psi \rightarrow \rho_W)$ as a function of the Werner state parameter $r$ for a fixed transformation probability $p=0.75$, where again $\alpha=0.01$. If $r$ is close to $1$ it is possible to achieve a significant increase of the transformation fidelity by only a small reduction of $r$. These results suggest that in some settings it might be more advantageous to establish a noisy Werner state with high fidelity, rather than aiming for the Bell state $\ket{\phi^+}$ with a large error. 

For completeness, we also show results for the conversion $\ket{\psi} \rightarrow \ket{\phi^+}$ in Fig. \ref{fig:fidelity_pure}. The main figure shows that optimal conversion fidelity $F_p(\ket{\psi} \rightarrow \ket{\phi^+})$ as a function of $p$ for $\alpha=0.2$. The inset of Fig. \ref{fig:fidelity_pure} also shows the upper bound given in Theorem 1. For $p=1$ the upper bound coincides with $F_p$. Note that for pure target states the upper bound has been investigated previously in~\cite{Regula_distillation}.

\begin{figure}
\includegraphics[width=\columnwidth]{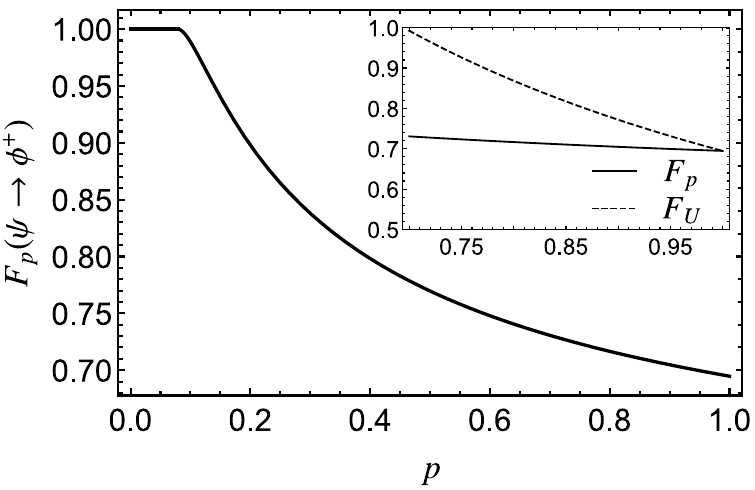}
\caption{Optimal fidelity for the transformation $\ket{\psi} \rightarrow \ket{\phi^+}$ is plotted with probability $p$,  where we choose $\alpha=0.2$. In the inset, optimal fidelity (solid line) and the upper bound of fidelity (dashed line) are shown with respect to $p$. Note that fidelity is $1$, when $p\leq G(\psi)/G(\phi^+)\approx 0.079$.}
\label{fig:fidelity_pure}
\end{figure}

\begin{figure}[h!]
\includegraphics[width=\columnwidth]{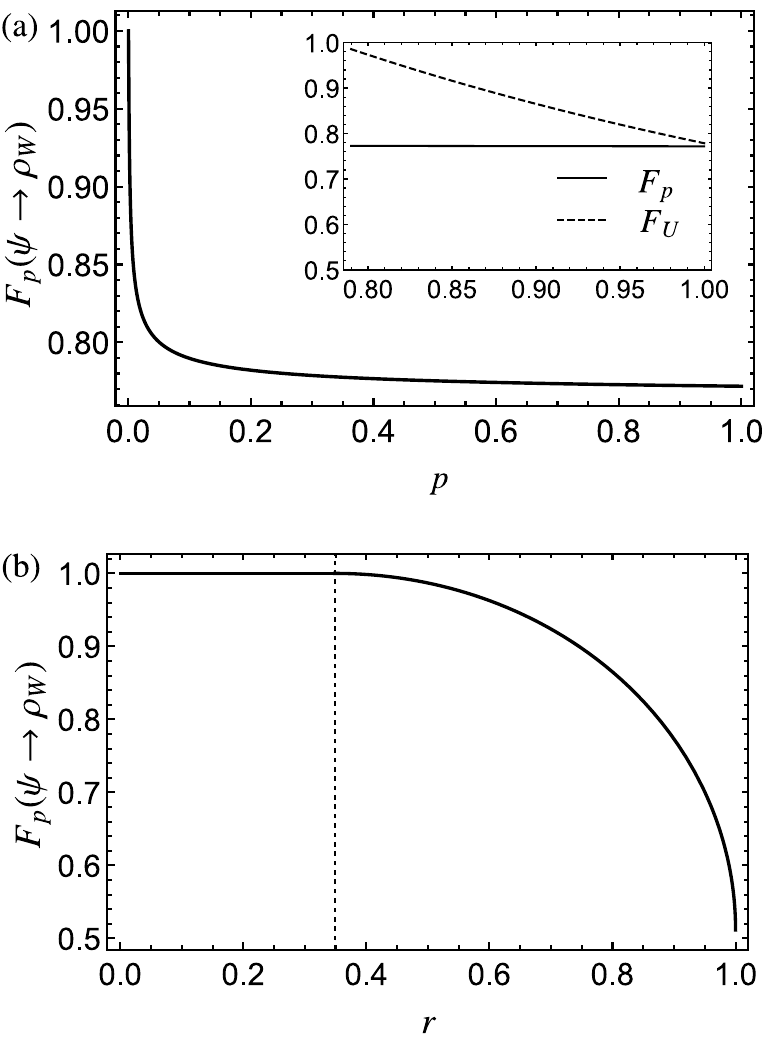}
\caption{(a) Optimal fidelity for the transformation $\psi \rightarrow \rho_W$ is plotted with probability $p$, where we choose $\alpha=0.01$ and $r=0.9$. In the inset, optimal fidelity (solid line) and the upper bound of fidelity (dashed line) are shown with respect to $p$. For $p=1$, there is a very small gap of $0.0068$ between the upper bound of fidelity and the exact fidelity. Optimal fidelity is $1$ for $p\leq G(\psi)/G(\rho_W)\approx 0.0004$, which is not visible in the figure.
(b) Optimal fidelity is plotted with $r$, where we consider $\alpha=0.01$ and transformation probability to be $0.75$. The vertical dashed line is at $r\approx 0.3487$ and below or equal this $r$ we have unit fidelity.}
\label{fig:fidelity_werner}
\end{figure}

\section{Proof of Eq. (\ref{optprob2})}\label{app_initial_state_fidelity_ball}
To start with, we note the following two facts: (1) Within $S_{\psi,f}$, the state which has the maximum geometric measure of entanglement can be chosen to be pure state, see Eq. (\ref{pure_max}). (2) From the Vidal's result \cite{vidal_prob}, we know that optimal probability of converting a pure two-qubit state into another two-qubit state is given by the ratio of their geometric measures of entanglement,  as also discussed previously. Hence, for the above case, the best strategy would be to start from a pure state with maximum possible geometric measure of entanglement in $S_{\psi,f_1}$ and end up in a state with the minimum possible geometric measure of entanglement in $S_{\rho,f_2}$.

Using the above facts, we obtain 
\begin{eqnarray}
&&\mathcal{P}_{f_1 , f_2}(\psi\rightarrow\rho) = \min \left\{\frac{\max_{\psi'\in S_{\psi,f_1}}G(\psi')}{\min_{\rho'\in S_{\rho,f_2}}G(\rho')},1\right\} \label{prob2} \\
&=&\min\left\{\frac{\sin^{2}\left(\min \left\{\sin^{-1}\sqrt{G(\psi)} + \cos^{-1}\sqrt{f_1},\frac{\pi}{4}\right)\right\}}{\sin^{2}\left(\max\left\{\sin^{-1}\sqrt{G(\rho)} - \cos^{-1}\sqrt{f_2},0\right\}\right)},1\right\}, \nonumber 
\end{eqnarray}
where in the last step we use Eqs. (\ref{min_geo}) and (\ref{pure_max}). Defining 
\begin{align}
    m_2=&\sin^{-1}\sqrt{G(\psi)}-\sin^{-1}\sqrt{G(\rho)} + \cos^{-1}\sqrt{f_1} \nonumber\\
    &+\cos^{-1}\sqrt{f_2}, \nonumber
\end{align}
let us consider the case when $m_2\geq0$, which can be expressed as
\begin{eqnarray}
&&\sin^{-1}\sqrt{G(\psi)} + \cos^{-1}\sqrt{f_1}  \geq\\&& \quad
\sin^{-1}\sqrt{G(\rho)}  - \cos^{-1}\sqrt{f_2}.\nonumber
\end{eqnarray}
This implies, 
\begin{eqnarray}
&&\min \left\{\sin^{-1}\sqrt{G(\psi)} + \cos^{-1}\sqrt{f_1},\frac{\pi}{4}\right\} \geq\nonumber\\ &&\quad\max\left\{\sin^{-1}\sqrt{G(\rho)} - \cos^{-1}\sqrt{f_2},0\right\}.
\end{eqnarray}
Therefore, from Eq. (\ref{prob2}),
\begin{equation}
    \mathcal{P}_{f_1 , f_2}(\psi\rightarrow\rho) =1.
\end{equation}
In the other case when,
\begin{eqnarray}
&&\sin^{-1}\sqrt{G(\rho)} - \cos^{-1}\sqrt{f_2}>\nonumber\\
&&\sin^{-1}\sqrt{G(\psi)} + \cos^{-1}\sqrt{f_1} > 0,
\end{eqnarray}
we have
\begin{equation}
\mathcal{P}_{f_1 , f_2}(\psi\rightarrow\rho) =\frac{\sin^{2}\left(\sin^{-1}\sqrt{G(\psi)} + \cos^{-1}\sqrt{f_1}\right)}{\sin^{2}\left(\sin^{-1}\sqrt{G(\rho)} - \cos^{-1}\sqrt{f_2}\right)}.
\end{equation}
Combining both these cases, we arrive at our final result.
\begin{eqnarray}
&&\mathcal{P}_{f_1,f_2}(\psi\rightarrow\rho) = \\&&\begin{cases}
1\,\,\,\mathrm{for}\,\,\,m_2 \geq 0,\\
\frac{\sin^{2}\left(\sin^{-1}\sqrt{G(\psi)} + \cos^{-1}\sqrt{f_1}\right)}{\sin^{2}\left(\sin^{-1}\sqrt{G(\rho)} - \cos^{-1}\sqrt{f_2}\right)}\,\,\,\mathrm{otherwise},
\end{cases}\nonumber
\end{eqnarray}
where, $m_2=\sin^{-1}\sqrt{G(\psi)}-\sin^{-1}\sqrt{G(\rho)} + \cos^{-1}\sqrt{f_1} +\cos^{-1}\sqrt{f_2}$.

\section{Optimal fidelity of the PR box with the set of local correlations}\label{app_prbox_ditill}
A non-signaling polytope corresponding to two inputs ($x,y\in\{0,1\}$) and two outputs ($a,b\in\{0,1\}$) has 24 vertices \cite{BarrettPhysRevA.71.022101}. Sixteen of them correspond to local deterministic boxes and are described by the following probability distribution \cite{BarrettPhysRevA.71.022101}
\begin{equation}
    P^{\alpha \beta \gamma \delta}(ab|xy)=
    \begin{cases}
    1 \,\, \mbox{for}\,\, a=\alpha x\oplus\beta, b=\gamma y\oplus\delta \\
    0 \,\, \mbox{otherwise},
    \end{cases}
\end{equation}
where $\alpha,\beta,\gamma,\delta\in\{0,1\}$. Here $\oplus$ denotes addition modulo 2. Convex combination of these 16 boxes will give all the local boxes in the local polytope. The remaining 8 boxes are known as PR boxes and represented as \cite{PRbox,BarrettPhysRevA.71.022101}
\begin{equation}
    P^{\alpha \beta \gamma}(ab|xy)=
    \begin{cases}
    \frac{1}{2} \,\, \mbox{for}\,\, a\oplus b=xy\oplus\alpha x\oplus\beta y\oplus\gamma \\
    0 \,\, \mbox{otherwise},
    \end{cases}
\end{equation}
where $\alpha,\beta,\gamma\in\{0,1\}$. Note that $\sum_{a,b} P(ab|xy)=1$ for all $x$ and $y$. For simplicity, we represent a box by a vector of sixteen elements. As an example, we consider a PR box with $\alpha=\beta=\gamma=0$ as following
\begin{equation}
\mathcal{B}_{\textrm{pr}}=\left(\frac{1}{2},0,0,\frac{1}{2},\frac{1}{2},0,0,\frac{1}{2},\frac{1}{2},0,0,\frac{1}{2},0,\frac{1}{2},\frac{1}{2},0\right)^T.
\end{equation}
Here, the ordering of $ab|xy$ is in the following order\\
$00|00,01|00,10|00,11|00,\\00|01,01|01,10|01,11|01,\\00|10,01|10,10|10,11|10,\\00|11,01|11,10|11,11|11$.\\
Consider the following isotropic box
\begin{equation}
\mathcal{B}_{\eta} = \eta\mathcal{B}_{\textrm{pr}}+(1-\eta)\mathcal{B}_{\textrm{random}},
\end{equation}
where 
\begin{eqnarray}
&&\mathcal{B}_{\textrm{random}}=\\
&&\left(\frac{1}{4},\frac{1}{4},\frac{1}{4},\frac{1}{4},\frac{1}{4},\frac{1}{4},\frac{1}{4},\frac{1}{4},\frac{1}{4},\frac{1}{4},\frac{1}{4},\frac{1}{4},\frac{1}{4},\frac{1}{4},\frac{1}{4},\frac{1}{4}\right)^T.\nonumber
\end{eqnarray}
For $\eta=1/2$, $\mathcal{B}_{\frac{1}{2}}$ becomes a local box \cite{Wolfe_2020}.
%\begin{equation}
%\mathcal{B}_{\frac{1}{2}} = %\frac{1}{2}\mathcal{B}_{PR}+\frac{1}{2}\mathcal{B}_{random}.
%\end{equation}
Since the ``boxes'' are classical-classical channels, (min-)fidelity between any two boxes (say $\mathcal{B}_{1}$ and $\mathcal{B}_2$) can be defined as 
\begin{align}\nonumber
    F(\mathcal{B}_1,\mathcal{B}_2)=\min_{q(x,y)} F\Big(&\sum_{x,y}q(x,y)p_{1}(ab|xy),\nonumber\\
    &\sum_{x,y}q(x,y)p_{2}(ab|xy)\Big).
\end{align}
Here, $q(x,y)$ is the input probability distribution, and $p_1(ab|xy)$ and $p_2(ab|xy)$ are the conditional probability distributions for getting the outputs $a,b$ (given inputs $x,y$) of boxes $\mathcal{B}_1$ and $\mathcal{B}_2$ respectively. From the concavity of the root fidelity, it follows that 
\begin{eqnarray}\nonumber
 &&\sqrt{F(\sum_{x,y}q(x,y)p_{1}(ab|xy),\sum_{x,y}q(x,y)p_{2}(ab|xy))} \\ &\geq&\sum_{x,y}q(x,y)\sqrt{F(p_{1}(ab|xy),p_{2}(ab|xy))} \nonumber \\
 &\geq&\min_{x,y}\sqrt{F(p_1(ab|xy),p_2(ab|xy))}.
\end{eqnarray}
This implies,
\begin{align}\nonumber
    F(\mathcal{B}_1,\mathcal{B}_2)&=\min_{q(x,y)} F\Big(\sum_{x,y}q(x,y)p_{1}(ab|xy),\nonumber\\
    &\quad\quad\quad\quad\quad\sum_{x,y}q(x,y)p_{2}(ab|xy)\Big)\nonumber\\
    &=\min_{x,y}F(p_1(ab|xy),p_2(ab|xy)).
\end{align}
Therefore, it is enough to consider the minimization over the inputs. Using this, the fidelity between $\mathcal{B}_{\textrm{pr}}$ and $\mathcal{B}_{\eta}$ can be calculated as
\begin{equation}
  F(\mathcal{B}_{\textrm{pr}},\mathcal{B}_{\eta})=\frac{\eta+1}{2}.
\end{equation}
This implies $F(\mathcal{B}_{\textrm{pr}},\mathcal{B}_{\frac{1}{2}})=\frac{3}{4}$.
 Now we will prove that there does not exists any local box which has fidelity greater than $3/4$ with $\mathcal{B}_{\textrm{pr}}$.

Let's take a general local box given by $\mathcal{B}_{\textrm{local}}=p(ab|xy)$. One can write joint probabilities in terms of marginals ($\langle A_x\rangle$) and correlations ($\langle A_xB_y\rangle$) \cite{Brunner_2014}
\begin{align}\label{correlators}
p(ab|xy)=\frac{1}{4}\Big(&1+(-1)^{a}\langle A_x\rangle+(-1)^{b}\langle B_y\rangle+\nonumber\\&(-1)^{a+b}\langle A_xB_y\rangle\Big),
\end{align}
where $\langle A_x\rangle=\sum_a (-1)^a p(a|x)$, $\langle B_y\rangle=\sum_b (-1)^b p(b|y)$, $\langle A_x B_y\rangle=\sum_{a,b} (-1)^{a+b} p(ab|xy)$, $p(a|x)=\sum_b p(ab|xy)$ and $p(b|y)=\sum_a p(ab|xy)$. Note that all local boxes satisfy the following Bell inequalities \cite{Brunner_2014}
\begin{equation}\label{bell_ineq}
2\geq \langle A_0B_0\rangle +\langle A_0B_1\rangle +\langle A_1B_0\rangle -\langle A_1B_1\rangle\geq-2.
\end{equation}
Permuting 0,1, we obtain other inequalities. Therefore there are 8 Bell inequalities.
The fidelity between this general local box $\mathcal{B}_{\textrm{local}}$ and $\mathcal{B}_{\textrm{pr}}$ is given by
\begin{align}
 &F(\mathcal{B}_{\textrm{pr}},\mathcal{B}_{\textrm{local}})=\nonumber\\
 &\quad\min\Bigg\{\left(\sqrt{\frac{p(00|00)}{2}}+\sqrt{\frac{p(11|00)}{2}}\right)^{2}
 ,\nonumber\\
 &\quad\quad\quad\quad\left(\sqrt{\frac{p(00|01)}{2}}+\sqrt{\frac{p(11|01)}{2}}\right)^{2},\nonumber\\
 &\quad\quad\quad\quad\left(\sqrt{\frac{p(00|10)}{2}}+\sqrt{\frac{p(11|10)}{2}}\right)^{2},\nonumber\\
 &\quad\quad\quad\quad\left(\sqrt{\frac{p(01|11)}{2}}+\sqrt{\frac{p(10|11)}{2}}\right)^{2}\Bigg\}.
\end{align}
If $F(\mathcal{B}_{\textrm{pr}},\mathcal{B}_{\textrm{local}})>\frac{3}{4}$, then following inequalities must hold
\begin{align}
   & \left(\sqrt{\frac{p(00|xy)}{2}}+\sqrt{\frac{p(11|xy)}{2}}\right)^{2}\nonumber\\
   &\quad>\frac{3}{4}\,\,\textrm{for}\,\,(x,y)\neq (1,1) \nonumber \\
  & \textrm{and}\,\, \left(\sqrt{\frac{p(01|11)}{2}}+\sqrt{\frac{p(10|11)}{2}}\right)^{2}>\frac{3}{4}.
\end{align}
Simplifying the above inequalities, we get
\begin{align}\nonumber
  & 2 p(00|xy)+2 p(11|xy)+4\sqrt{p(00|xy)p(11|xy)}\nonumber\\
  &\quad>3\,\,\textrm{for}\,\,(x,y)\neq (1,1) \nonumber\\
 & \textrm{and}\,\, 2 p(01|11)+ 2 p(10|11)+4\sqrt{p(01|11)p(10|11)}\nonumber\\
 &\quad\quad>3.
\end{align}
Using the above inequalities, along with positivity constraints ($p(ab|xy)\geq 0 $) and substituting Eq. (\ref{correlators}), we obtain the following inequalities
\begin{equation}
    \langle A_xB_y\rangle >\frac{3+(\langle A_x\rangle+\langle B_y\rangle)^{2}}{6}\,\,\textrm{for}\,\,(x,y)\neq (1,1)
\end{equation}
and
\begin{equation}
   - \langle A_1B_1\rangle >\frac{3+(\langle A_1\rangle-\langle B_1\rangle)^{2}}{6}.
\end{equation}
Summing these inequalities, we arrive at
\begin{eqnarray}\label{violation}
   && \langle A_0B_0\rangle +\langle A_0B_1\rangle +\langle A_1B_0\rangle -\langle A_1B_1\rangle >\\
   &&\frac{12+\sum_{(x,y)\neq (1,1)}(\langle A_x\rangle+\langle B_y\rangle)^{2}+(\langle A_1\rangle-\langle B_1\rangle)^{2}}{6}.\nonumber
\end{eqnarray}
Note that, $\sum_{(x,y)\neq (1,1)}(\langle A_x\rangle+\langle B_y\rangle)^{2}$ and $(\langle A_1\rangle-\langle B_1\rangle)^{2}$ are non-negative. Therefore, Eq. (\ref{violation}) is a contradiction to Eq. (\ref{bell_ineq}). This proves that the maximum fidelity of $\mathcal{B}_{\textrm{pr}}$ with the set of local boxes is $\frac{3}{4}$.

\bibliographystyle{quantum}
\bibliography{imaginarity}

\end{document}